\documentclass{article}

\usepackage{microtype}
\usepackage{graphicx}
\usepackage{booktabs} 

\usepackage{hyperref}

\usepackage[preprint]{icml2026}

\usepackage{amsmath}
\usepackage{amssymb}
\usepackage{mathtools}
\usepackage{amsthm}

\usepackage{times}
\usepackage{subfigure}
\usepackage{multirow}
\usepackage{threeparttable}
\usepackage{tikz}
\usepackage{colortbl}
\usepackage[most]{tcolorbox}
\usepackage{listings}
\usepackage{inconsolata}
\usepackage{tikz}
\usepackage[textsize=tiny]{todonotes}
\usepackage{xspace}
\usepackage{wrapfig,lipsum,booktabs}

\usepackage{caption}

\usepackage{hyperref}
\usepackage{courier}
\usepackage{graphicx}

\usepackage[capitalize,noabbrev]{cleveref}

\theoremstyle{plain}
\newtheorem{theorem}{Theorem}[section]
\newtheorem{proposition}[theorem]{Proposition}
\newtheorem{lemma}[theorem]{Lemma}

\theoremstyle{definition}
\newtheorem{definition}[theorem]{Definition}

\theoremstyle{remark}
\newtheorem{remark}[theorem]{Remark}

\newcommand{\E}{\mathbb{E}}
\newcommand{\Prob}{\mathbb{P}}

\newcommand{\hide}[1]{}
\renewcommand{\algorithmiccomment}[1]{\hfill $\triangleright$\ #1}

\newcommand{\name}{\textsc{RisCoSet}\xspace}

\usepackage[textsize=tiny]{todonotes}

\icmltitlerunning{Uncertainty Quantification for LLM-based Code Generation}

\begin{document}

\twocolumn[
  \icmltitle{Uncertainty Quantification for LLM-based Code Generation}



  \icmlsetsymbol{equal}{*}




  \begin{icmlauthorlist}
    \icmlauthor{Senrong Xu}{yyy}
    \icmlauthor{Yuhao Tan}{yyy}
    \icmlauthor{Yanke Zhou}{yyy}
    \icmlauthor{Guangyuan Wu}{yyy}
    \icmlauthor{Zenan Li}{yyyy}\\
    \icmlauthor{Yuan Yao}{yyy}
    \icmlauthor{Taolue Chen}{yyyyy}
    \icmlauthor{Feng Xu}{yyy}
    \icmlauthor{Xiaoxing Ma}{yyy}
  \end{icmlauthorlist}

  \icmlaffiliation{yyy}{State Key Lab of Novel Software Technology, Nanjing University}
  \icmlaffiliation{yyyy}{ETH Zürich}
  \icmlaffiliation{yyyyy}{Birkbeck, University of London}

  \icmlcorrespondingauthor{Senrong Xu}{srxu@smail.nju.edu.cn}
  \icmlcorrespondingauthor{Yuan Yao}{y.yao@nju.edu.cn}

  \icmlkeywords{Machine Learning, ICML}

  \vskip 0.3in
]



\printAffiliationsAndNotice{}  

\begin{abstract}

Prediction sets provide a theoretically grounded framework for quantifying uncertainty in machine learning models. Adapting them to structured generation tasks, in particular, large language model (LLM) based code generation, remains a challenging problem. An existing attempt proposes PAC prediction sets but is limited by its strong monotonicity assumption on risk and single-label classification framework, 
which severely limits the space of candidate programs and cannot accommodate the multiple valid outputs inherent to code generation.
To address these limitations, 
we propose an approach \name that leverages multiple hypothesis testing to construct risk-controlling predictions for LLM-based code generation. Given a trained code generation model, we produce a prediction set represented by a partial program, which is guaranteed to contain a correct solution with high confidence. 
Extensive experiments on three LLMs demonstrate the effectiveness of the proposed method. For instance, compared with the state-of-the-art, our method can significantly reduce the code removal by up to 24.5\%, at the same level of risk.

\end{abstract}


\section{Introduction} \label{sec:intro}
\begin{figure*}[t]
 \centering
 \subfigure{
 \includegraphics[width=3.3in]{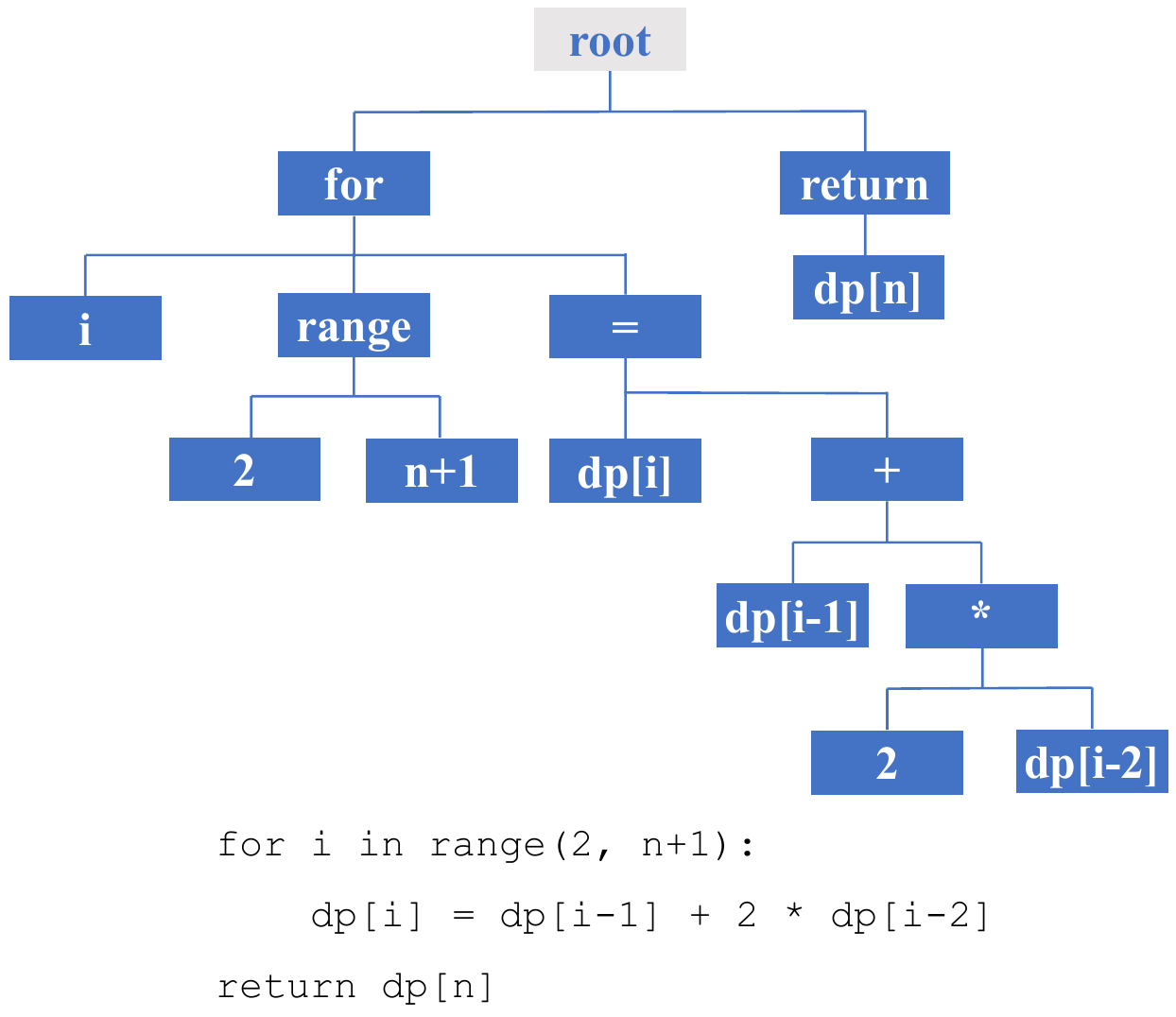}}
 \subfigure{
\includegraphics[width=3.1in]{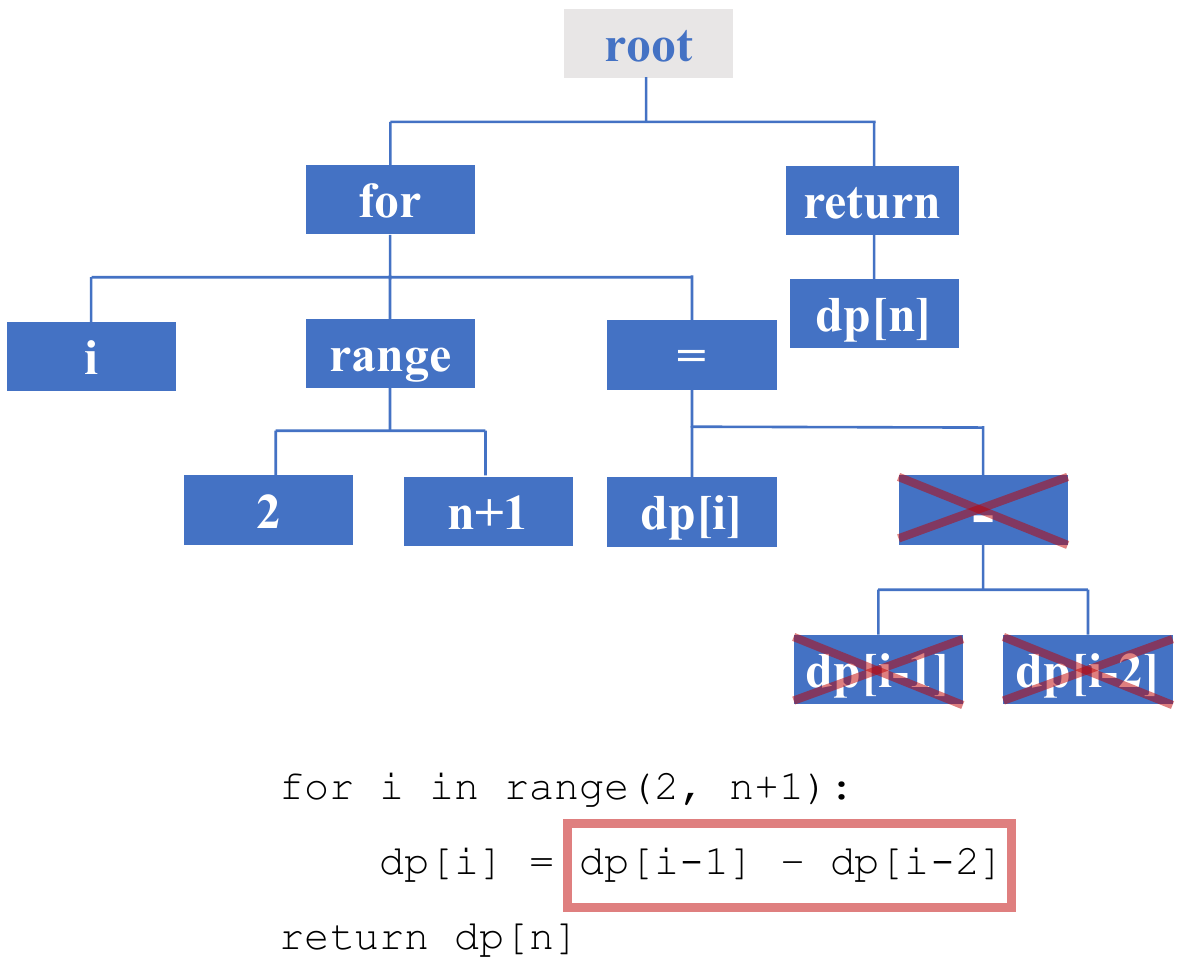}}
 \caption{An illustrative example from MBPP. The left part is a correct code snippet, and the right part is a generated but incorrect one.
 \name removes three nodes in the AST, resulting a prediction set (i.e., a partial program) that contains  the correct program.}
 \label{F:example}
 \vspace{-0.5em}
\end{figure*}

Large language models (LLMs) 
have been widely adopted by millions of users worldwide due to their strong performance~\cite{achiam2023gpt,team2023gemini}. 
However, they often hallucinate facts and may generate toxic or biased outputs~\cite{nadeau2024benchmarking}, limiting 
their use in high-stakes applications such as code generation and legal analysis~\cite{liu2024exploring,dahl2024large}. 


\hide{
Conformal Prediction (CP) provides model‑agnostic, distribution‑free guarantees for uncertainty quantification and has recently been adapted to LLMs~\cite{vovk2005algorithmic,angelopoulos2022conformal}. 
Existing work applies conformal prediction internally to a single LLM output, identifying its sound components as a prediction set~\cite{quach2024conformallanguagemodeling,cherian2024largelanguagemodelvalidity}.
However, these approaches assume near-independence among response components, which holds for sentences in NLP but fails for structured outputs like code. 
In code generation, components interact systematically, i.e., locally valid parts can still form an incorrect program. 
This strong interdependence violates the independence assumption and complicates reliable prediction set construction.
}

As an uncertainty quantification method, conformal prediction (CP) can provide model-agnostic, distribution-free uncertainty guarantees for machine learning models, and thus has attracted recent attention~\cite{vovk2005algorithmic,angelopoulos2022conformal}. 
For example, several studies have applied CP into LLM generation by identifying sound components within the LLM output~\cite{quach2024conformallanguagemodeling,cherian2024largelanguagemodelvalidity}.
However, these methods assume near independence among components, an assumption that holds for sentences in NLP but fails for structured outputs like code, where components interact systemically~\cite{casalnuovo2019studying,chen2022codet}. 

\hide{
Such sample-level approaches typically assume that response components are nearly independent, which is plausible in NLP (where sentences additively contribute to correctness). However, this assumption  fails for structured outputs such as code.
In code generation, components interact systemically: a collection of locally valid parts may still yield an incorrect program. 
This strong interdependence violates the independence assumption and substantially complicates the construction of reliable prediction sets for structured generation tasks.
}

To address this, \citet{khakhar2023pac} proposed {PAC (probably approximately correct) prediction sets} for code generation, constructing structured prediction sets and providing statistical guarantees via partial programs~\cite{guo2021learning}. 
However, this approach suffers from two key limitations: (1) it relies on a strong monotonicity constraint that restricts the space of admissible partial programs, which may lead to a suboptimal solution; (2) it is designed for single-label classification and does not easily accommodate the multiple valid programs (e.g., multiple syntactically different programs with identical, correct semantics) that naturally arise in code generation.



In this paper, we introduce a new method, \name, for constructing risk-controlling prediction sets tailored to code generation.
To overcome the limitations in PAC prediction sets, we reframe code uncertainty quantification as a multi-label classification problem, and adopt the Learn Then Test (LTT) framework~\cite{angelopoulos2025learn}, a recent risk-controlling method, to construct structured prediction sets with rigorous statistical guarantees. 
That is, given an LLM-generated code, 
\name yields 
a partial program (i.e., a subtree of the program's abstract syntax tree (AST), see Fig.~\ref{F:example}), which is guaranteed, with high confidence, to contain a correct solution. 

\hide{
The flexibility and scalability of LTT framework makes it suitable to diverse tasks, such as classification, regression prediction, and out-of-distribution detection,
However, the search space over prediction sets of tasks mentioned above is simple and only has one dimension. 
This very different on structured data, whose search space are complex and high dimensional.
Therefore, we design a special risk function for such structured data in the LTT framework.
In addition, to collect more valid programs as true labels in multi-label settings, we need to sample more candidate programs from LLMs and execute their test cases to verify correctness during calibration.  
In order to lower such computation costs, we propose a selective execution strategy that efficiently verifies LLM-sampled programs to augment the true label set. 
This strategy decides whether a sample program needs to be executed against test cases with a well-calibrated threshold.
}

However, applying LTT into code generation is still challenging. The first challenge lies in the complexity of code search space.
Unlike traditional applications such as classification and regression whose search space is typically one-dimensional, 
the search space of structured data (such as code) is complex and multi-dimensional~\cite{solar2008program}.
To this end, we design a tailored risk function and 
the corresponding optimization problem for obtaining partial programs. 
Furthermore, in multi-label settings, we need to sample additional candidate programs from LLMs and verify their correctness via test-case execution during calibration.
This is a computationally expensive step. 
To reduce this cost, we propose a selective execution strategy that efficiently verifies the correctness of sampled programs by using a well-calibrated threshold to decide which programs require test execution.

\noindent\textbf{Example.}
Fig.~\ref{F:example} illustrates a real example 
from the MBPP dataset~\cite{austin2021program}. 
The left code snippet is from a correct program, while the right one is generated by the model and is incorrect. 
Our method, \name, removes 
the uncertain nodes in the AST (highlighted in red), resulting in a partial program (AST subtree) which can be completed to match a correct program.
This example also shows the potential implication of our method: \name can provide an almost-correct partial program of a given generated program, helping to mitigate code hallucinations~\cite{liu2024exploring} and reduce the cost of manual review~\cite{zhong2017boosting}. 
%


\noindent\textbf{Contributions.} 
The main contributions of the paper are summarized as:
(1) we specialize an LTT-based code generation approach \name for uncertainty quantification of LLM-based code generation, which reframes the construction of prediction sets for code as a multi-label problem with weaker constraints; 
(2) we propose a selective execution strategy to balance the trade-off between risk-controlling and sampling costs; 
(3) We conduct comprehensive experiments across three LLMs and three code generation datasets, confirming the effectiveness of our method.
To the best of our knowledge, this is the first work to introduce the LTT framework into structured code generation tasks. 


\noindent\textbf{Structure.} 
Section~\ref{sec:pre} briefly introduces the background. 
Section~\ref{sec:vs} compares PAC and LTT in code generation. 
Section~\ref{sec:method} presents the proposed methods and Section~\ref{sec:exp} reports the experimental results. Section~\ref{sec:rel} discusses the related work, and Section~\ref{sec:con} concludes the paper.


\section{Background} \label{sec:pre} 

\textbf{Notations}.
For any natural number $n$, let $[n]:=\{1,\dots,n\}$.
Let $\mathcal{D}$ denote a distribution over $\mathcal{X}\times\mathcal{Y}$, and let
$\mathcal{D}_\mathrm{cal}=\{(X_i,Y_i)\}_{i=1}^n \stackrel{\mathrm{i.i.d.}}{\sim} \mathcal{D}$
denote a calibration set.
In the single-label case, $Y_i\in\mathcal{Y}$; in the multi-label case, $Y_i\subseteq\mathcal{Y}$.
Let $(X_{n+1},Y_{n+1})\sim \mathcal{D}$ be a test sample.
An abstract syntax tree (AST) of a program is denoted by
$T=(\mathcal{V},\mathcal{E})$, where $\mathcal{V}$ and $\mathcal{E}$ are the sets of nodes and edges, respectively.

\subsection{PAC Prediction Set}
The objective of a PAC prediction~\cite{park2020pac} is to produce a prediction set, instead of a single prediction, that satisfies statistical guarantees while remaining small in size.
Formally, for a test input $X_{n+1}\in \mathcal{X}$, PAC prediction set $C_\tau(X_{n+1})\subseteq \mathcal{Y}$ ensures that the true label $Y_{n+1}$ is contained within the set for at least a $1-\varepsilon$ fraction of the population, with a probability at least $1-\delta$, i.e.,
\begin{equation*}\label{eq:PAC}
\Prob\Big(\Prob_{(X,Y)\sim \mathcal{D}}\big(Y\in C_\tau(X)\big)\geq 1-\varepsilon\Big) \geq 1-\delta,
\end{equation*}
where $C_\tau(x)$ is parameterized by $\tau\in \mathbb{R}$, and the 
$\varepsilon$ and error level $\delta$ are chosen in advance by the user.


%
Technically, the prediction set is constructed by fitting the parameter $\tau$ on the calibration set $\mathcal{D}_\mathrm{cal}$. Existing methods~\cite{park2020pac,khakhar2023pac} rely on the assumption that $\Prob(Y_{n+1}\in C_\tau(X_{n+1}))$ is monotone-decreasing in $\tau$, which is typically guaranteed by the monotonicity of prediction sets, i.e., 
\begin{equation}\label{eq:monotonicity}
\tau<\tau' \rightarrow C_\tau(X_{n+1})\supseteq C_{\tau'}(X_{n+1}).
\end{equation}
Therefore, a standard PAC formulation defines $C_\tau(X_{n+1})$ as the set of all labels whose model-assigned score exceeds the threshold $\tau$:
\begin{equation*}\label{eq:PAC set}
C_\tau(X_{n+1}) := \left\{y\in \mathcal{Y} \mid f(y\mid X_{n+1}) \geq \tau\right\},
\end{equation*}
where $f(y\mid X_{n+1})$ is a scoring function (e.g., the softmax output of a trained classifier).


\subsection{LTT Framework}\label{sec:LTT}

The Learn Then Test (LTT) framework~\cite{angelopoulos2025learn} extends PAC prediction sets by providing control over the expectation of an arbitrary risk function, conditional on the calibration data. 
It achieves this by reframing the selection of the prediction set parameter as a multiple-hypothesis testing problem. 

Specifically, for a parameter $\lambda$, let $\mathcal{T}_{\lambda}(x)\subseteq \mathcal{Y}$ denote the prediction set returned for an input $x\in\mathcal{X}$. 
For a given $\mathcal{T}_{\lambda}$, we define a corresponding {\em risk} $R(\mathcal{T}_{\lambda}) \in \mathbb{R}_{\geq 0}$ that quantifies a problem-specific statistical error rate. 
The goal of the LTT framework is to learn a parameter $\hat{\lambda}$ such that the function $\mathcal{T}_{\hat{\lambda}}$ satisfies the following error-control guarantee for a test instance $(X_{n+1},Y_{n+1})$.
\begin{definition}
[Risk-controlling] A random variable $\hat{\lambda} \in \Lambda$ is risk-controlling at $(\alpha,\delta)$-level, if it satisfies
\begin{equation}\label{eq:RCP}
\Prob( R(\mathcal{T}_{\hat{\lambda}}) \leq \alpha ) \geq 1-\delta.
\end{equation}
\end{definition}
The risk tolerance $\alpha$ and error level $\delta\in (0,1)$ are user-specified. 
For brevity, we abbreviate $R(\lambda)=R(\mathcal{T}_\lambda)$ below. 




To control target risks, the LTT framework leverages the base model and the calibration set $\mathcal{D}_\mathrm{cal}$ to identify valid $\lambda$ via multiple hypothesis testing.
Each $\lambda_k\in \Lambda=\{\lambda_1, \dots,\lambda_N\}$ is associated with the null hypothesis $\mathcal{H}_k:R(\lambda_k)>\alpha$ and $\lambda_k$ is considered valid if $\mathcal{H}_k$ is rejected.
For each $\lambda_k$, a super-uniform $p$-value $p_k$ is computed. 
The output of LTT framework is the reject set 
$\Lambda_{\mathrm{valid}}=\mathcal{A}(p_1,\dots,p_N)\subseteq\Lambda$, 
where $\mathcal{A}$ is any algorithm that controls the family-wise error rate (FWER).
With these steps, we have the following theorem.

\begin{theorem}
[Learn Then Test~\cite{angelopoulos2025learn}]\label{thm:LTT}
Suppose each $p_k$ has a super-uniform distribution under $\mathcal{H}_k$. Let $\mathcal{A}$ be a FWER-controlling algorithm at level $\delta$. Given a test instance $(X_{n+1},Y_{n+1})$, $\Lambda_{\mathrm{valid}}$ satisfies
\begin{equation}\label{eq:thm1}
\Prob \left( \sup_{\lambda\in\Lambda_{\mathrm{valid}}} R(\lambda)  \leq \alpha\mid \mathcal{D}_\mathrm{cal} \right ) \geq 1-\delta,
\end{equation}
where the supremum over an empty set is defined as $-\infty$. 
\end{theorem}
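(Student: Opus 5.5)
The plan is the standard LTT argument: reduce the event in \eqref{eq:thm1} to the event that no true null hypothesis is rejected, and then use FWER control. Throughout, the risk $R(\lambda)$ is a deterministic functional of $\lambda$, an expectation over the population, so for each fixed $\lambda_k\in\Lambda$ the statement ``$\mathcal{H}_k$ is true'' is non-random. The only randomness in \eqref{eq:thm1} comes from $\mathcal{D}_\mathrm{cal}$ through the $p$-values $p_1,\dots,p_N$, and hence through $\Lambda_{\mathrm{valid}}=\mathcal{A}(p_1,\dots,p_N)$. I will read the probability in \eqref{eq:thm1} as taken over the draw of $\mathcal{D}_\mathrm{cal}$ (the calibration-conditional guarantee). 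Let $\mathcal{K}_0=\{k: R(\lambda_k)>\alpha\}$ be the index set of true nulls.

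\textbf{Step 1 (event inclusion).} I would show
\begin{equation*}
\Big\{\sup_{\lambda\in\Lambda_{\mathrm{valid}}} R(\lambda)>\alpha\Big\} \subseteq \big\{\exists k\in\mathcal{K}_0:\ \lambda_k\in\Lambda_{\mathrm{valid}}\big\}.
\end{equation*}
The left event forces $\Lambda_{\mathrm{valid}}$ to be nonempty, since the supremum over the empty set is $-\infty\le\alpha$. Because $\Lambda$ is finite, the supremum is attained, so some $\lambda_k\in\Lambda_{\mathrm{valid}}$ has $R(\lambda_k)>\alpha$. That means $k\in\mathcal{K}_0$ and $\mathcal{H}_k$ was rejected, i.e.\ a false rejection (family-wise error) occurred.

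\textbf{Step 2 (FWER control).} Next I use the hypothesis on $\mathcal{A}$. Its definition is that, whenever each $p_k$ with $k\in\mathcal{K}_0$ is super-uniform (that is, $\Prob(p_k\le u)\le u$ for all $u\in[0,1]$), we have $\Prob(\exists k\in\mathcal{K}_0: \lambda_k\in\mathcal{A}(p_1,\dots,p_N))\le\delta$. The super-uniformity assumption in the theorem gives exactly this premise. Combining with Step 1, the probability of the bad event is at most $\delta$, and taking complements yields \eqref{eq:thm1}.

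\textbf{Main obstacle.} The delicate point is not the algebra but making sure FWER control is genuinely applicable. First, $\mathcal{A}$ must control FWER under arbitrary dependence among the $p_k$, since they are all computed from the same $\mathcal{D}_\mathrm{cal}$. Bonferroni or fixed-sequence testing do this, using only the marginal super-uniformity of each $p_k$; for Bonferroni, a union bound over $\mathcal{K}_0$ gives $\sum_{k\in\mathcal{K}_0}\delta/N\le\delta$. Second, the conditioning in \eqref{eq:thm1} must be interpreted correctly: the guarantee is over the calibration draw, with $R$ already integrating over the test pair $(X_{n+1},Y_{n+1})$. I would state this interpretation explicitly before giving the two steps.
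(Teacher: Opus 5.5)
Your argument is correct and matches the standard Learn-Then-Test proof. The paper does not reprove this theorem but imports it from \citet{angelopoulos2025learn}, which argues as you do: the event $\{\sup_{\lambda\in\Lambda_{\mathrm{valid}}}R(\lambda)>\alpha\}$ forces a rejection of some true null $\mathcal{H}_k$, and FWER control bounds that probability by $\delta$. You are also right to read the probability in \eqref{eq:thm1} as taken over the draw of $\mathcal{D}_\mathrm{cal}$, with $R(\lambda)$ a deterministic population risk, since literally conditioning on $\mathcal{D}_\mathrm{cal}$ would make the event deterministic.
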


This theorem implies that any $\lambda\in\Lambda_{\mathrm{valid}}$ can be selected to control the risk $R(\lambda)$ on test instances. In particular, we recover a procedure similar to the PAC prediction set by defining $\mathcal{T}_\lambda(X_{n+1}):=\{y\in \mathcal{Y} \mid f(y\mid X_{n+1}) \geq \tau\}$ and $R(\lambda):=\mathbf{1}\{Y_{n+1}\notin\mathcal{T}_\lambda(X_{n+1})\}$.

\section{PAC vs. LTT in Code Generation }\label{sec:vs}

Code generation is a typical structured prediction task with an exponentially large label space, rendering conventional conformal prediction intractable~\cite{vovk2005algorithmic}. 
To address this, \citet{khakhar2023pac} proposed to construct prediction sets via partial programs. 
Their method represents a program as an AST, denoted as $T = (\mathcal{V}, \mathcal{E})$, and formulates an optimization problem which selectively prunes nodes and edges from $T$, yielding a partial program $T_p = (\mathcal{V}_p, \mathcal{E}_p)$. 
The resulting prediction set is defined as the collection of all complete programs that can be obtained by expanding $T_p$, and a correct program $T' = (\mathcal{V}', \mathcal{E}')$ belongs to this set iff $\mathcal{V}_p \subseteq \mathcal{V}'$ and $\mathcal{E}_p \subseteq \mathcal{E}'$. 
This construction 
enables the application of the PAC prediction set framework to code generation, providing statistical guarantees on program correctness. 

However, PAC prediction sets exhibit two limitations when applied to code generation. 
First, they impose a monotonicity constraint (cf.\ Eq.~\eqref{eq:monotonicity}) as a compromise to PAC guarantee~\cite{park2020pac} and can lead to suboptimal partial programs during optimization. 
Second, PAC prediction sets are inherently designed for single-label tasks, whereas code generation naturally admits multiple valid outputs (i.e., multiple syntactically different but semantically-identical programs), rendering it a multi-label problem. 

In contrast, the LTT framework avoids these limitations and is inherently flexible and scalable to multi-label settings. 
Crucially, LTT does not require defining a scoring function for each program in an exponentially large prediction set; it only requires specifying the prediction set itself and its associated risk without any constraint. 
We therefore argue that LTT is better suited for quantifying uncertainty in code generation than PAC prediction sets.
\section{The Approach} \label{sec:method}

We now introduce our method, \name, for constructing structured, risk-controlling prediction sets for LLM-based code generation. At a high level, \name defines a structured risk function on code prediction sets via solving an optimization problem.
For each calibration program $X_i\in\mathcal{D}_\mathrm{cal}$, we sample multiple correct programs to obtain the enhanced label set $\tilde{Y}_i$, according to the corresponding generation task of $X_i$.
Finally, we apply multiple hypothesis testing, following the procedure of LTT, to identify valid parameters $\Lambda_{\mathrm{valid}}$. These parameters are then used to generate risk-controlling prediction sets for the test program $X_{n+1}$.
Additionally, we propose a selective execution strategy to lower the computational cost of verifying sample programs' correctness, by tolerating a controlled rate of labeling errors.





\subsection{Structured Risk Function}\label{sec:risk}
Unlike standard multi-label classification problems where the false discovery rate (FDR) is a common control target, our focus is on the semantic correctness of the structured output. 
Specifically, given an input program $X$, we require the prediction set $\mathcal{T}_\lambda(X)$ to contain at least one program that belongs to the set $\tilde{Y}$ of correct programs. 
Formally, we define the empirical structured risk on the enhanced calibration set $\{(X_i,\tilde{Y}_i)\}_{i=1}^n$ as  
\begin{equation}\label{eq:risk}
\widehat{R}_n^s(\lambda):=\frac{1}{n}\sum_{i=1}^n\mathbf{1}\{\mathcal{T}_\lambda(X_i)\cap \tilde{Y}_i=\emptyset\}. 
\end{equation}
%
%

Since the label space in code generation is high-dimensional and exponentially large, directly enumerating all complete programs is intractable.
Following \citet{khakhar2023pac}, we construct the prediction set $\mathcal{T}_\lambda(X)$ using partial programs, which provide a compact representation of the candidate program set, as mentioned in Section~\ref{sec:vs}.


The remaining problem is to obtain partial programs for $X_i$ accounting for uncertainty in code generation. Given an AST $T=(\mathcal{V},\mathcal{E})$ and a parameter $\lambda_k\in\Lambda$, we introduce a binary variable $\beta^{(k)}_{i,v}$ for each node $v\in\mathcal{V}$, where $\beta^{(k)}_{i,v}=1$ indicates that $v$ is removed from $T$, and $\beta^{(k)}_{i,v}=0$ otherwise.
Let $\pi_v\in\mathbb{R}_{\geq 0}$ denote the negative log probability of $v$ conditioned on its ancestors, i.e., $\pi_v=-\log \Prob(v\mid v_1,\dots,v_m)$.
Partial programs can be obtained by 
optimization. For fixed $k\in[N]$ and input program $X_i$:
\begin{equation}\label{eq:partial}
\min_{\beta} \sum_{v\in \mathcal{V}} \beta^{(k)}_{i,v}, ~~ \mathrm{s.t.} \sum_{v\in \mathcal{V}} \pi_{v}(1-\beta^{(k)}_{i,v}) \leq \lambda_k. 
\end{equation}

In this minimization problem, the goal is to retain as many nodes as possible while ensuring the total uncertainty of the partial programs does not exceed $\lambda_k$.
Moreover, the solution must satisfy two additional structure constraints derived from ASTs:
\begin{equation}\label{eq:tree constraint}
\begin{aligned}
&\mathrm{(SC1) } ~~(\beta^{(k)}_{i,v} = 1) \rightarrow (\beta^{(k)}_{i,v'} = 1), &&\forall (v,v') \in \mathcal{E}, \\
&\mathrm{(SC2) } ~\sum_{(v, v') \in \mathcal{E}} \beta^{(k)}_{i,v'}(1-\beta^{(k)}_{i,v}) \leq t_{\max},
\end{aligned}
\end{equation}
where (SC1) enforces tree connectivity and (SC2) imposes an upper-bound $t_{\max}\in\mathbb{N}^+$ on the number of permissible subtree removals.

In contrast to PAC prediction set~\cite{khakhar2023pac}, our optimization problem for constructing partial programs operates under a less restrictive set of constraints. The most significant relaxation is the removal of the monotonicity condition on prediction sets (cf. Eq.\eqref{eq:monotonicity}), which substantially enlarges the feasible label space.



\subsection{Multiple Hypothesis Testing for Code}\label{sec:mht}
 
With the structured risk defined above, we then employ multiple hypothesis testing to identify the reject set $\Lambda_{\mathrm{valid}}$.
The first step of this procedure is to compute a $p$-value $p_k$ for each null hypothesis $\mathcal{H}_k$. 
Since the structured risk defined in Eq.\eqref{eq:risk} is an average of binary losses, 
a valid $p$-value can be computed by the following lemma. 
\begin{lemma}[Binomial trial $p$-values]\label{lemma:p-value}
Let $\widehat{R}_n^s(\lambda_k)$ be the empirical structured risk defined in Eq.\eqref{eq:risk}, and let $\mathrm{Bin}(n,\alpha)$ denote a binomial random variable with $n$ trials and success probability $\alpha$. Then,
\begin{equation}\label{eq:p-value}
p_k^{BT} = e\Prob\big(\mathrm{Bin}(n,\alpha)\leq n\widehat{R}_n^s(\lambda_k)\big)
\end{equation}  
is a valid p-value for $\mathcal{H}_k:R^s(\lambda_k)>\alpha$.
\end{lemma}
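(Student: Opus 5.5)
The plan is to show that, under $\mathcal{H}_k$, the quantity $p_k^{BT}$ is super-uniform, i.e.\ $\Prob(p_k^{BT}\le u)\le u$ for all $u\in[0,1]$. I reduce this to the classical binomial tail $p$-value and then observe that the factor $e\ge 1$ only makes it more conservative.

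\textbf{Step 1: distribution of the empirical risk.} Fix $\lambda_k\in\Lambda$. The partial program $\mathcal{T}_{\lambda_k}(X_i)$ is obtained from the optimization in Eq.~\eqref{eq:partial} under (SC1)--(SC2). It depends only on $X_i$, the fixed model and $\lambda_k$, not on the other calibration points. Hence the losses $L_i:=\mathbf{1}\{\mathcal{T}_{\lambda_k}(X_i)\cap\tilde{Y}_i=\emptyset\}$ are i.i.d.\ Bernoulli random variables, because the pairs $(X_i,\tilde{Y}_i)$ are i.i.d. Their common mean is the population structured risk $R^s(\lambda_k)$, that is, the probability that the prediction set for a fresh draw misses every correct program. (Here I take $\tilde{Y}_i$ to be exactly the set of correct labels, with no labeling noise.) Consequently $S:=n\widehat{R}_n^s(\lambda_k)=\sum_i L_i\sim\mathrm{Bin}(n,R)$ with $R=R^s(\lambda_k)$. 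The map $\lambda\mapsto\mathcal{T}_\lambda$ is not monotone, but this does not matter, since each hypothesis is tested at a single fixed $\lambda_k$.

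\textbf{Step 2: stochastic dominance.} Write $F_\theta(x):=\Prob(\mathrm{Bin}(n,\theta)\le x)$. For each fixed $x$, $F_\theta(x)$ is nonincreasing in $\theta$; this follows from the standard coupling $\mathbf{1}\{U_i\le\theta\}$ with $U_i$ uniform. Under $\mathcal{H}_k$ we have $R>\alpha$, so $F_\alpha(S)\ge F_R(S)$ pointwise. Therefore
\[
\{p_k^{BT}\le u\}=\{eF_\alpha(S)\le u\}\subseteq\{F_\alpha(S)\le u\}\subseteq\{F_R(S)\le u\}.
\]

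\textbf{Step 3: discrete probability integral transform.} For any integer-valued $S$ with CDF $F$, $\Prob(F(S)\le u)\le u$. To see this, let $s^*=\max\{s: F(s)\le u\}$. If no such $s$ exists, the event is empty. Otherwise the event equals $\{S\le s^*\}$, whose probability is $F(s^*)\le u$. Applying this with $F=F_R$ and chaining with Step 2 gives $\Prob(p_k^{BT}\le u)\le u$ for all $u\in[0,1]$. For $u\ge 1$ the inequality is trivial. Hence $p_k^{BT}$ is super-uniform under $\mathcal{H}_k$, as required by Theorem~\ref{thm:LTT}.

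The main obstacle I expect is Step 1. One must state precisely that the population risk $R^s(\lambda_k)$ is the Bernoulli mean of the calibration losses, which requires the enhanced label sets $\tilde{Y}_i$ to be exact. If the selective execution strategy introduces labeling errors, an extra correction would be needed, and the conservative factor $e$ may be intended to absorb that. Steps 2 and 3 are routine.
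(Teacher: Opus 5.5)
Your proof is correct and follows the same route as the paper. Under $\mathcal{H}_k$, $n\widehat{R}_n^s(\lambda_k)$ is binomial with parameter $R^s(\lambda_k)>\alpha$; stochastic dominance gives $F_\alpha\ge F_{R}$; the probability integral transform finishes; and $e\ge 1$ only adds slack (the paper divides by $e$ before the transform, whereas you discard it first). Your Step~3 is more careful than the paper's, which asserts $F_B(F_B^{-1}(c/e))=c/e$ even though only ``$\le$'' holds for a discrete $B$, and the factor $e$ comes from the Bentkus inequality (as the paper's closing remark indicates), not from a correction for label noise.
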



In LTT, let $\mathcal{A}$ be any FWER-controlling algorithm at level $\delta$.
Applying $\mathcal{A}$ to the $p$-values $\{p^{BT}_k\}_{k=1}^N$ yields a set of rejected hypotheses, whose corresponding configurations form $\Lambda_{\mathrm{valid}}\subseteq\Lambda$.
If $\Lambda_{\mathrm{valid}}=\emptyset$, the method abstains (i.e., returns \texttt{null}).
Otherwise, any parameter in $\Lambda_{\mathrm{valid}}$ can be selected.
In practice, we select the one that empirically minimizes the number of node removals.
Since the number of removals decreases monotonically with $\lambda$, this amounts to choosing
\begin{equation}\label{eq:lambda}
\hat{\lambda}=\max(\Lambda_{\mathrm{valid}}).
\end{equation}
Note that the monotonicity here only affects the number of pruned nodes (i.e., the size of prediction sets), which naturally follows from the optimization objective in Section~\ref{sec:risk}.
This is distinct from and weaker than the explicit monotonicity constraint on prediction sets imposed in Eq.~\eqref{eq:monotonicity}.
Consequently, the parameter $\hat{\lambda}$ selected by Eq.~\eqref{eq:lambda}, based on $\mathcal{D}_\mathrm{cal}$, provides a risk-controlling guarantee for the resulting prediction sets.

\begin{algorithm} 
\caption{Risk-Controlling Partial Program}
\label{alg:LTT}
\begin{algorithmic}[1]
\STATE {\bfseries Input:} enhanced calibration set $\mathcal{D}_\mathrm{cal}=\{(X_i,\tilde{Y}_i)\}_{i=1}^n$; test program $X_{n+1}$; parameters $\Lambda=\{\lambda_k\}_{k=1}^N$;  maximum of subtree removals $t_{\max}$; risk tolerance $\alpha$; error level $\delta$;
\STATE {\bfseries Output:} partial program $\beta_{n+1,v}$ for $X_{n+1}$.
\FOR {$k=1,\dots,N$}
\FOR {$i=1,\dots,n$}
\STATE $\beta^{(k)}_{i,v}\leftarrow$ Solve  Eqs.\eqref{eq:partial}-\eqref{eq:tree constraint} for $X_i$, given $\lambda_k$;
\ENDFOR
\STATE Define $\widehat{R}_n^s(\lambda_k)$ using Eq.~\eqref{eq:risk};
\STATE Compute p-value $p_k^{BT}$ for $\widehat{R}_n^s(\lambda_k)$ using Eq.~\eqref{eq:p-value};
\ENDFOR
\STATE Obtain $\Lambda_{\mathrm{valid}}=\mathcal{A}(p_1^{BT},\dots,p_N^{BT})$ via controlling FWER;
\STATE Select $\hat{\lambda}=\max(\Lambda_{\mathrm{valid}})$;
\STATE $\beta_{n+1,v}\leftarrow$ Solve  Eqs.\eqref{eq:partial}-\eqref{eq:tree constraint} for $X_{n+1}$, given $\hat{\lambda}$.
\end{algorithmic}
\end{algorithm}


Algorithm~\ref {alg:LTT} summarizes how to obtain risk-controlling partial programs via picking desirable $\hat{\lambda}$. 
This result ensures that the prediction set $\mathcal{T}_{\hat{\lambda}}(X_{n+1})$ constructed by these partial programs is also risk-controlling.

\begin{theorem}[Structured risk-controlling sets]\label{thm:guarantee}
The prediction set $\mathcal{T}_{\hat{\lambda}}(X_{n+1})$, constructed by the partial program from Algorithm~\ref{alg:LTT}, satisfies Eq.~\eqref{eq:RCP}.
\end{theorem}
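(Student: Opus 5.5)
The plan is to reduce the claim to Theorem~\ref{thm:LTT}. I will check its two hypotheses for the specific choices in Algorithm~\ref{alg:LTT}, and then argue that selecting $\hat{\lambda}=\max(\Lambda_{\mathrm{valid}})$ keeps the guarantee.

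First I fix the population risk. For each $\lambda_k$, set $R^s(\lambda_k)=\Prob_{(X,\tilde Y)}\big(\mathcal{T}_{\lambda_k}(X)\cap\tilde Y=\emptyset\big)$. For fixed $\lambda_k$, the partial program for $X_i$ is produced by a deterministic optimization, namely Eqs.~\eqref{eq:partial}--\eqref{eq:tree constraint}, solved with a fixed tie-breaking rule. That optimization depends only on $X_i$ and $\lambda_k$, and not on any other calibration point. The losses $L_{i,k}=\mathbf{1}\{\mathcal{T}_{\lambda_k}(X_i)\cap\tilde Y_i=\emptyset\}$ are therefore i.i.d.\ Bernoulli$(R^s(\lambda_k))$ across $i$, because the enhanced pairs $(X_i,\tilde Y_i)$ are i.i.d. I flag one subtlety: the enhanced label sets $\tilde Y_i$ must be generated by a per-instance procedure that is applied identically to the test point, and I will state this as the standing assumption. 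No monotonicity of $\mathcal{T}_\lambda$ in $\lambda$ is needed anywhere, because each $k$ is handled separately.

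Second I verify super-uniformity of $p_k^{BT}$ under $\mathcal{H}_k$, which is exactly Lemma~\ref{lemma:p-value}. Under $\mathcal{H}_k$ we have $R^s(\lambda_k)>\alpha$, so $n\widehat R^s_n(\lambda_k)\sim\mathrm{Bin}(n,R^s(\lambda_k))$ stochastically dominates $\mathrm{Bin}(n,\alpha)$. Writing $F$ for the CDF of $\mathrm{Bin}(n,\alpha)$, it follows that $\Prob(F(n\widehat R^s_n)\le u)\le u$, and multiplying by $e\ge 1$ only makes the bound more conservative. Since the lemma is stated earlier, I simply invoke it.

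Third, $\mathcal{A}$ controls FWER at level $\delta$. Theorem~\ref{thm:LTT} then gives $\Prob(\sup_{\lambda\in\Lambda_{\mathrm{valid}}}R^s(\lambda)\le\alpha)\ge 1-\delta$. On this event, every element of $\Lambda_{\mathrm{valid}}$ has risk at most $\alpha$. The selected $\hat\lambda=\max(\Lambda_{\mathrm{valid}})$ is a measurable function of $\mathcal{D}_\mathrm{cal}$ lying in $\Lambda_{\mathrm{valid}}$, so $R^s(\hat\lambda)\le\sup_{\lambda\in\Lambda_{\mathrm{valid}}}R^s(\lambda)\le\alpha$, which is Eq.~\eqref{eq:RCP}. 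The guarantee is thus uniform over the valid set, and the data-dependent choice of $\hat\lambda$ costs nothing. When $\Lambda_{\mathrm{valid}}=\emptyset$ the method abstains, which by convention incurs no risk violation, so the statement is understood as holding on the abstention event as well.

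The main obstacle is making the i.i.d.\ structure of the losses rigorous. The risk $R^s(\hat\lambda)$ must be interpreted as a test-time probability with $\hat\lambda$ frozen, conditionally on $\mathcal{D}_\mathrm{cal}$. The optimization must also act on the test program $X_{n+1}$ exactly as it acts on the calibration programs. If the solver is randomized or ties are broken arbitrarily, I would fold that randomness into the definition of $\mathcal{T}_\lambda$ using independent per-instance seeds, which preserves the exchangeability that the argument relies on.
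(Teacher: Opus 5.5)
Your proposal is correct and follows essentially the same route as the paper's proof. The steps match: i.i.d.\ per-instance losses, validity of $p_k^{BT}$ by Lemma~\ref{lemma:p-value}, simultaneous control over $\Lambda_{\mathrm{valid}}$ from Theorem~\ref{thm:LTT}, and then $\hat{\lambda}=\max(\Lambda_{\mathrm{valid}})\in\Lambda_{\mathrm{valid}}$. Your extra care does not change the argument but makes explicit points the paper's proof glosses over. These are deterministic per-instance solver behavior, label augmentation applied identically at test time, and the abstention event. You also write $\Prob(F(n\widehat{R}^s_n)\le u)\le u$ as an inequality, which is what actually holds for a discrete binomial, where the paper writes an equality.
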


To precisely search and test $\Lambda$ in the context of code generation, in Appendix~\ref{A:FWER}, we discuss three alternative FWER control algorithms: \textit{Bonferroni method}, \textit{Holm–Bonferroni method}~\cite{holm1979simple}, and \textit{fixed sequence testing}~\cite{bauer1991multiple}. 

\subsection{Selective Execution Strategy}\label{sec:sv}
As mentioned in Section~\ref{sec:vs}, code generation is inherently a multi-label problem. The LTT framework supports label augmentation on the calibration set $\mathcal{D}_\mathrm{cal}$. 
For each generation task of $X_i\in\mathcal{D}_\mathrm{cal}$, we sample additional $m$ programs from the generative code model and execute test cases to identify correct instances. 
Such a sampling and verification step is performed only during calibration, incurring no computational overhead at test time. 

Despite this, in scenarios with severely limited computational resources, we propose a selective execution strategy that maintains high-quality labels while improving sampling efficiency. 
Given a total of $M$ sampled programs on $\mathcal{D}_\mathrm{cal}$, we use program uncertainty scores $\{U_i\}_{i=1}^M$ (e.g., perplexity) to decide whether executing test cases is necessary.
We learn a threshold $\hat{u}$ such that programs with $U_i < \hat{u}$ are directly accepted as correct, while those with $U_i \geq \hat{u}$ are verified by executing test cases. 
This strategy does not require perfectly calibrated models, 
but better calibration can indeed enhance the precision of $\hat{u}$. 
Formally, we define the cumulative error function conditioned on $u$ as
\begin{equation*}
L(u)=\frac{1}{M}\sum_{i=1}^{M}\ell(S_i)\mathbf{1}\{U_i<u\},
\end{equation*}
where $S_i\in\{0,1\}$ is a true indicator ($1$ for a correct program, $0$ otherwise), and $\ell(S_i)=1-S_i$ computes the error.

Our goal is to select a threshold $u$ such that $L(u)\leq\epsilon$, where $\epsilon$ is a label error rate.
However, directly computing $L(u)$ by verifying all $M$ sampled programs is computationally expensive.
To circumvent this, we instead estimate an error upper bound $\hat{L}^u(\gamma)$ on $L(u)$ that holds  pointwise with probability $\gamma$, i.e., satisfies the following inequality:
\begin{equation}\label{eq:error upper bound}
\Prob(\hat{L}^u(\gamma) \geq L(u)) \geq 1-\gamma.
\end{equation}

To compute an approximate error bound $\hat{L}^u(\gamma)$, we propose a procedure analogous to importance sampling, inspired by \citet{candes2025probably}.
We begin by drawing $h$ indices $\{i_1,\dots,i_h\}$ uniformly with replacement from $[M]$. 
For each sampled index $i_j$, whether a test-case execution is performed is determined stochastically by a Bernoulli trial $\xi_{i_j}\sim \mathrm{Bern}(\omega_{i_j})$, where $\{\omega_i\}_{i=1}^M$ are predefined sampling weights.
This procedure yields a set $\{Z_j(u)\}_{j=1}^h$ of $h$ i.i.d. random variables, defined as $Z_j(u)=\ell(S_{i_j})(\xi_{i_j}/\omega_{i_j})\mathbf{1}\{U_{i_j}\leq u\}$. 


Since $\E[\xi_{i_j}/\omega_{i_j}\mid i_j]=1$, the expectation of $Z_j(u)$ equals the target quantity $L(u)$. Therefore, an unbiased estimate of $L(u)$ can be obtained from the sample mean of $\{Z_j(u)\}_{j=1}^h$. An upper confidence bound for $L(u)$ is then 
constructed 
by Hoeffding’s inequality~\cite{hoeffding1963probability}, yielding a finite-sample upper bound
\begin{equation}\label{eq:hoeffding bound}
\hat{L}^u(\gamma)=\hat{\mu}_Z(u) + \delta_{HB}(\gamma),
\end{equation}
where $\hat{\mu}$ denotes the empirical mean of $\{Z_j(u)\}_{j=1}^h$ and Hoeffding bound $\delta_{HB}(\gamma)=\sqrt{\log(2/\gamma)/2h}/\omega_{\mathrm{min}}$, where $\omega_{\mathrm{min}}=\min\{\omega_i\}_{i=1}^M$.

After obtaining $\hat{L}^u(\gamma)$, let $\hat{u} = \max\{u\in[0,1]:\hat{L}^u(\gamma)\leq\epsilon\}$ where $\epsilon\in(0,1)$, and the final predicted label $\hat{S}_i$ is
\begin{equation*}
\hat{S}_i=S_i\mathbf{1}\{U_i\geq \hat{u}\} + \mathbf{1}\{U_i< \hat{u}\}.
\end{equation*}

Algorithm~\ref{alg:EUB} describes the selective execution procedure.
We have the following error upper bound.
\begin{lemma}[Non-asymptotic error upper bound]\label{lemma:Hbound}
If the label set $\{\hat{S}_i\}^M_{i=1}$ of sample programs is computed by Algorithm~\ref{alg:EUB}, then it meets Eq.~\eqref{eq:error upper bound} and $\hat{L}^u(\gamma)\leq \epsilon$.
\end{lemma}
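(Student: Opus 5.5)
The proof has two parts. The first is the pointwise validity of $\hat{L}^u(\gamma)$ in Eq.~\eqref{eq:error upper bound}, for each fixed $u$. The second is the inequality $\hat{L}^{\hat u}(\gamma)\le\epsilon$, which follows directly from how $\hat u$ is chosen.

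\textbf{Validity of the bound.} Fix $u$ and condition on the $M$ sampled programs, so that $\{(S_i,U_i)\}_{i=1}^M$ and the weights $\{\omega_i\}$ are deterministic. The randomness then comes only from two sources: the uniform draw of indices $i_1,\dots,i_h$ with replacement, and the independent Bernoulli variables $\xi_{i_j}\sim\mathrm{Bern}(\omega_{i_j})$. Consequently $Z_1(u),\dots,Z_h(u)$ are i.i.d.

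I will first establish unbiasedness by the tower rule:
\[
\E[Z_j(u)]=\E\big[\ell(S_{i_j})\mathbf{1}\{U_{i_j}\le u\}\,\E[\xi_{i_j}/\omega_{i_j}\mid i_j]\big]=\frac1M\sum_{i=1}^M\ell(S_i)\mathbf{1}\{U_i\le u\}.
\]
This equals $L(u)$, up to the strict versus non-strict inequality at $u$. I will reconcile that by using the same convention in both definitions; alternatively, note that with the $\le$ indicator the mean dominates $L(u)$, which only helps the upper bound.

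Next I will bound the range: $0\le Z_j(u)\le 1/\omega_{i_j}\le 1/\omega_{\min}$. Hoeffding's inequality for i.i.d. variables in $[0,1/\omega_{\min}]$ then gives
\[
\Prob\big(\hat\mu_Z(u)-\E Z\le -t\big)\le \exp(-2ht^2\omega_{\min}^2).
\]
Setting $t=\delta_{HB}(\gamma)=\sqrt{\log(2/\gamma)/2h}/\omega_{\min}$ makes the right-hand side $\gamma/2\le\gamma$; the $2/\gamma$ in the definition is the two-sided constant, so the one-sided bound is slightly conservative. Hence $\Prob(\hat\mu_Z(u)+\delta_{HB}(\gamma)\ge L(u))\ge1-\gamma$ conditionally on the samples. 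Integrating over the samples gives the same bound unconditionally, which is Eq.~\eqref{eq:error upper bound}.

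\textbf{Second claim.} By definition $\hat u=\max\{u:\hat L^u(\gamma)\le\epsilon\}$. The function $u\mapsto\hat\mu_Z(u)$ is a nondecreasing step function that is right-continuous for the $\le$ indicator, so the feasible set is a closed interval $[0,u^*]$ intersected with $[0,1]$. Its maximum is therefore attained, and $\hat L^{\hat u}(\gamma)\le\epsilon$. If the set is empty, Algorithm~\ref{alg:EUB} executes every program (take $\hat u=0$), and the claim holds vacuously for the accepted set. The labels $\hat S_i$ are then formed with this $\hat u$.

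\textbf{Main obstacle.} The pointwise statement does not automatically transfer to the data-dependent $\hat u$, because $\hat u$ is selected from the same $Z_j$'s. The lemma as stated only asserts that the bound holds pointwise and that the selected $\hat u$ satisfies $\hat L^{\hat u}(\gamma)\le\epsilon$, so I will prove exactly that.

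If a guarantee at $\hat u$ is wanted, I would first try a monotonicity argument. Both $L(u)$ and $\hat\mu_Z(u)$ are nondecreasing in $u$. Define $u_0$ as the smallest $u$ with $L(u)>\epsilon$. On the event $\hat L^{u_0}(\gamma)\ge L(u_0)$, which has probability at least $1-\gamma$, we get $\hat L^{u_0}(\gamma)>\epsilon$, hence $\hat u<u_0$, and therefore $L(\hat u)\le\epsilon$. This requires only the pointwise bound at the single deterministic point $u_0$, and I would include it as a remark.
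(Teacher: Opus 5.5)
Your proof of the pointwise bound follows the paper's proof step for step. Both use independence of the $Z_j(u)$, the range $[0,1/\omega_{\min}]$, unbiasedness from $\E[\xi_{i_j}/\omega_{i_j}\mid i_j]=1$, and Hoeffding with $t=\delta_{HB}(\gamma)$, where the $\log(2/\gamma)$ constant makes the one-sided tail $\gamma/2\le\gamma$.

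In two places you are more careful than the paper:
\begin{itemize}
\item Your tower-rule computation of $\E[Z_j(u)]$ actually averages over the uniform index. The paper's display averages only over $\xi$ and then equates an $h$-sample average with the $M$-average $L(u)$.
\item You notice the $<$ versus $\le$ mismatch and that it goes in the harmless direction.
\end{itemize}
Your closing remark is also worthwhile. The paper's proof of Proposition~\ref{prop:sampling} uses $L(\hat u)\le\hat L^{\hat u}(\gamma)$ at the data-dependent $\hat u$ without justification. A monotonicity argument through a single fixed point is exactly what repairs that.

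The step that fails as written is your attainment argument for the second claim. The $\le$ indicator makes $u\mapsto\hat\mu_Z(u)$ a right-continuous nondecreasing step function, and such functions need not have closed sublevel sets. Suppose some $U_{i_j}$ with $Z_j>0$ sits exactly where $\hat L^u(\gamma)$ crosses $\epsilon$. Then the feasible set is $[0,U_{i_j})$, and the maximum over $[0,1]$ does not exist. If you replace it by the supremum $u^*$, you get $\hat L^{u^*}(\gamma)>\epsilon$, so the claimed inequality would fail. Closed sublevel sets need left-continuity, i.e.\ the strict indicator $\mathbf{1}\{U_{i_j}<u\}$, which also gives $\E[Z_j(u)]=L(u)$ exactly. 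The simpler fix is to restrict $u$ to the finite set $\{U_i\}_{i=1}^M$, which is where Algorithm~\ref{alg:EUB} actually evaluates $\hat L^u(\gamma)$. There the maximum is attained whenever the set is nonempty.

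The same continuity issue affects your remark. With the paper's strict-inequality $L$, which is left-continuous, the smallest $u$ with $L(u)>\epsilon$ need not exist; the set $\{L>\epsilon\}$ can be $(U_k,1]$. Instead define $u_0$ through the right-continuous function $L^{\le}(u):=\frac1M\sum_i\ell(S_i)\mathbf{1}\{U_i\le u\}=\E[Z_j(u)]\ge L(u)$, or work on the grid. Then $u_0$ exists, and on the event $\hat L^{u_0}(\gamma)\ge L^{\le}(u_0)$, of probability at least $1-\gamma/2$, you get $\hat u<u_0$. Hence $L(\hat u)\le L^{\le}(\hat u)\le\epsilon$.
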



This result provides a non-asymptotic upper bound for $L(u)$, suitable for settings with limited sample sizes (small $M$). 
Alternatively, an asymptotic bound for large $M$ is given in Appendix~\ref{A:CLT bound}.

Combining Lemma~\ref{lemma:Hbound} and Theorem~\ref{thm:guarantee}, we obtain the error-controlling guarantee when using the selective execution strategy.

\begin{proposition}[Approximately risk-controlling sets]\label{prop:sampling}
Let the output of Algorithm~\ref{alg:EUB} be the calibration set $\mathcal{D}_\mathrm{cal}$. 
Then, the parameter $\hat{\lambda}$ selected by Eq.~\eqref{eq:lambda} satisfies
\begin{equation*}
\Prob\big(R^s(\hat{\lambda})\leq \alpha+\epsilon(1-\alpha)\big)\geq (1-\gamma)\cdot(1-\delta).
\end{equation*}

\end{proposition}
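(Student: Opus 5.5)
The plan is to combine the two guarantees through an intersection of events and then relate the risk measured under the noisy labels $\hat S$ to the true structured risk $R^s$. Define the event $E_1=\{L(\hat u)\le \epsilon\}$. Lemma~\ref{lemma:Hbound} gives $\Prob(\hat L^{\hat u}(\gamma)\ge L(\hat u))\ge 1-\gamma$ together with $\hat L^{\hat u}(\gamma)\le\epsilon$. Hence $\Prob(E_1)\ge 1-\gamma$.

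Next, define $\tilde R(\lambda)$ as the structured risk computed with the enhanced label sets built from the possibly corrupted labels $\hat S$. A sampled program enters $\tilde Y_i$ iff $\hat S=1$. The calibration pairs $(X_i,\tilde Y_i)$ are produced by a fixed, data-independent labeling rule once $\hat u$ is fixed. So, conditionally on the labeling stage (on $\hat u$), Theorem~\ref{thm:guarantee} applies verbatim to the risk $\tilde R$. Combined with the maximality choice in Eq.~\eqref{eq:lambda}, this yields $E_2=\{\tilde R(\hat\lambda)\le\alpha\}$ with $\Prob(E_2\mid \hat u)\ge 1-\delta$. Multiplying gives $\Prob(E_1\cap E_2)\ge(1-\gamma)(1-\delta)$. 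The factorization is legitimate because $E_1$ is measurable with respect to the labeling stage, while the bound on $E_2$ holds conditionally on that stage. This is why the statement has a product rather than a union bound.

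The remaining step, which I expect to be the main obstacle, is to show deterministically that on $E_1\cap E_2$,
\[R^s(\hat\lambda)\le \tilde R(\hat\lambda)+\epsilon\bigl(1-\tilde R(\hat\lambda)\bigr)\le \alpha+\epsilon(1-\alpha).\]
The second inequality holds because $r\mapsto r+\epsilon(1-r)$ is increasing in $r$ when $\epsilon<1$.

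For the first inequality, I would decompose the miscoverage indicator. A task whose prediction set is counted as covering under $\tilde Y$ fails under the true labels only if every program in the intersection is a falsely accepted one. A false acceptance is one with $S=0$ and $U<\hat u$, which is exactly what $L(\hat u)$ counts. So the true miss rate is at most the noisy miss rate plus the rate of "covered-only-by-false-positives" events. The latter are confined to the fraction $1-\tilde R(\hat\lambda)$ of tasks that are noisily covered. I would bound their proportion by $\epsilon$, arguing that $L(\hat u)\le\epsilon$ bounds the label error rate among accepted labels. This requires interpreting $\epsilon$ as a per-label error rate that transfers to a per-task rate among covered tasks.

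This transfer is the delicate point. $L$ is normalized by the total number $M$ of sampled programs, not by tasks. So I would either argue that each erroneous cover requires at least one distinct mislabeled program, or use an exchangeability argument at the population level to pass from the program-level error rate to the task-level conditional probability $\Prob(\text{false cover}\mid\text{covered})\le\epsilon$. Writing $R^s=\Prob(\text{miss}_{\text{true}})=\tilde R+(1-\tilde R)\,\Prob(\text{false cover}\mid\text{covered})$ then gives exactly the stated form.
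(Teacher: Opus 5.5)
\textbf{The flagged step is a genuine gap.} Your identity $R^s=\tilde R+(1-\tilde R)\,q$ is correct, where $q$ is the probability that a noisily covered task is covered only by falsely accepted programs. It holds because $\hat S\ge S$, so each noisy label set contains the true one, and $q\le\epsilon$ would finish the argument. But the step you flagged cannot be closed from Lemma~\ref{lemma:Hbound} and Theorem~\ref{thm:guarantee} alone.

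\emph{Counting route.} It only shows that at most $M L(\hat u)\le\epsilon M\approx\epsilon nm$ calibration tasks are falsely covered. That is a task-level rate of order $\epsilon m$, not $\epsilon$, and it is only an in-sample statement. It is vacuous in the paper's selective-execution experiments, where $m=80$.

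\emph{Exchangeability route.} It fails because the auto-accepted programs are exactly the low-$U$ ones. These are typically the programs closest to the model's own output, which is what $\mathcal{T}_\lambda(X)$ is built from, so they are not a uniform sample of the pool.

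\emph{Counterexample.} Suppose that on every task a copy of the generated program $X_i$ is among the samples and is auto-accepted, while every other label is correct. With failure rate $f$ this gives $L(\hat u)=f/m$, which is compatible with $\hat L^{\hat u}(\gamma)\le\epsilon$ once $m$ and $h$ are large. Since $X_i\in\mathcal{T}_\lambda(X_i)$ for every $\lambda$, the noisy empirical risk is $0$ for every $\lambda$; all $p$-values equal $e(1-\alpha)^n$. So LTT certifies the largest $\lambda$. If that $\lambda$ prunes nothing, its true risk can be as large as $f$. For instance, $f=0.3$ exceeds $\alpha+\epsilon(1-\alpha)=0.19$ when $\alpha=\epsilon=0.1$.

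\emph{What the paper does instead.} An extra assumption is unavoidable, and the paper's proof supplies one rather than deriving it. It posits that each $y\in Y_{n+1}$ lies in $\mathcal{T}_{\hat\lambda}(X_{n+1})$ independently with a common probability $p$, and that $|Y^t_{n+1}|/|Y_{n+1}|\ge 1-\epsilon$. It then obtains $(1-p)^{|Y^t_{n+1}|}\le\alpha^{1-\epsilon}\le\alpha+\epsilon(1-\alpha)$ via Bernoulli's inequality. Under that model your route closes too. Set $a=(1-p)^{|Y_{n+1}|}$ and $r=|Y^t_{n+1}|/|Y_{n+1}|\ge1-\epsilon$. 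Then $q=(a^r-a)/(1-a)\le(a^{1-\epsilon}-a)/(1-a)\le\epsilon$, using weighted AM--GM in the form $a^{1-\epsilon}\le(1-\epsilon)a+\epsilon$. You need to state such an assumption explicitly rather than try to extract $q\le\epsilon$ from $L(\hat u)$.

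\textbf{The product bound is also unjustified.} The threshold $\hat u$ is computed from the scores and executed labels of the very programs that form $\tilde Y_1,\dots,\tilde Y_n$. So, conditionally on $\hat u$, the pairs $(X_i,\tilde Y_i)$ are not i.i.d.\ draws from the noisy-label distribution at threshold $\hat u$. Hence the binomial $p$-values of Lemma~\ref{lemma:p-value} need not be super-uniform for $\tilde R$.

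In addition, $E_1=\{L(\hat u)\le\epsilon\}$ depends on the true labels of the whole calibration pool, so it is not $\sigma(\hat u)$-measurable. Conditioning on enough to make it measurable breaks the i.i.d.\ structure that Theorem~\ref{thm:LTT} requires.

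The paper merely asserts the product, so this gap is shared. A clean repair is to compute $\hat u$ on data independent of $\mathcal{D}_{\mathrm{cal}}$. Conditioning on that data then leaves $\mathcal{D}_{\mathrm{cal}}$ i.i.d., and your factorization becomes legitimate.
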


\begin{remark}
This guarantee presents a practical trade-off, allowing for controlled relaxation of the risk guarantee in exchange for computational savings by permitting a non-zero label error rate $\epsilon$.
When $\epsilon=0$,
the result reduces to the original guarantee of Theorem~\ref{thm:guarantee}.
\end{remark}


\begin{algorithm} 
\caption{Selective Execution with Error Bound} 
\label{alg:EUB}
\begin{algorithmic}[1]
\STATE {\bfseries Input:} unlabeled sampling programs $\{X_i\}_{i=1}^M$; uncertainty scores $\{U_i\}_{i=1}^M$; test cases $\{C_i\}_{i=1}^M$; sampling weights $\{\omega_i\}_{i=1}^M$; sampling size $h$; label error rate $\epsilon$; confidence level $\gamma$;
\STATE {\bfseries Output:} labeled sampling programs $\{X_i,\hat{S}_i\}^M_{i=1}$.
\FOR {$j=1,\dots,h$}
\STATE Sample an index $i_j\sim \mathrm{Unif}([M])$;
\STATE Sample a random variable $\xi_{i_j}\sim \mathrm{Bern}(\omega_{i_j})$;
\IF{$\xi_{i_j}=1$}
\STATE Execute the test case $C_{i_j}$ to get $S_{i_j}$;
\STATE Compute the importance-weighted error using $Z_j=\ell(S_{i_j})/\omega_{i_j}$;
\ELSE
\STATE $Z_j=0$;
\ENDIF
\ENDFOR
\STATE Given a threshold $u$, define $Z_j(u)=Z_j\cdot\mathbf{1}\{U_{i_j}\leq u\}$ for $j\in[h]$;
\STATE Compute error upper bound $\hat{L}^u(\gamma)$ on $\{Z_j(u)\}_{i=1}^h$ for all $u\in\{U_i\}_{i=1}^M$ using Eq.~\eqref{eq:hoeffding bound};
\STATE Determine $\hat{u} = \max\{u\in[0,1]:\hat{L}^u(\gamma)\leq\epsilon\}$;
\STATE Obtain $\hat{S}_i=S_i\mathbf{1}\{U_i\geq \hat{u}\} + \mathbf{1}\{U_i< \hat{u}\}$ for $i\in[M]$.
\end{algorithmic}
\end{algorithm}

\section{Experiments} \label{sec:exp}


\noindent\textbf{Datasets.}
We evaluate our approach on three standard Python code generation benchmarks: HumanEval~\cite{chen2021evaluating}, MBPP~\cite{austin2021program} and APPS~\cite{hendrycks2021measuring}. HumanEval consists of 164 hand-written programming problems. 
We use the sanitized version of the MBPP dataset (nearly 500 problems) as provided by CodeT~\cite{chen2022codet}, and select the introductory-level problems of the APPS dataset (1,000 problems). 
Each dataset is then randomly partitioned into calibration and test sets using a 50/50 split. 
To mitigate variance from random partitioning, we repeat this process over 100 independent random trials.

\noindent\textbf{Baselines.}
We compare our method \name against two state-of-the-art baselines: the PAC prediction set approach~\cite{khakhar2023pac} and its greedy variant. 
The PAC baseline constructs partial programs by solving a complex optimization problem, while the greedy baseline iteratively removes the AST node that reduces the partial program’s negative log-likelihood (NLL) the most.

\noindent\textbf{Evaluation metrics.}
To evaluate the structured prediction sets $\mathcal{T}_{\hat{\lambda}}(X_{n+1})$ produced by our \name, we report two metrics: (1) the percentage of nodes removed from AST, and (2) the coverage of code sets that contain at least one correct program. 
We aim to keep node removals small while ensuring that coverage remains above the target threshold $1-\alpha$, thereby preserving more useful program information.

\noindent\textbf{Hyperparameters.}
To control risk, the key hyperparameter is the search space for $\lambda$. 
We implement a uniform grid search with an increment of 0.02, a simple but effective discretization. 
Empirically, we observe that outcomes are insensitive to the exact resolution, as long as the grid density is adequate to capture the operative range of $\lambda$.
We set the maximum subtree removals $t_{\max}=1$ and the additional sampling quantity $m=20$ for label augmentation on the calibration set by default. 
For the risk parameter, we fix $\delta = 0.1$ and vary $\alpha$ in our experiments.

\noindent\textbf{Implementations.}
We evaluate three open-source LLMs: the code-specialized Deepseek-Coder-33B~\cite{guo2024deepseek} and Qwen2.5-Coder-32B~\cite{hui2024qwen2}, along with the general-purpose Llama3.1-70B~\cite{grattafiori2024llama}. 
For all LLMs, we use a sampling temperature of 0.8 and a top-$p$ of 0.95 by default. 
We choose fixed sequence testing as the FWER algorithm for LTT.
Moreover, we solve the optimization problem for generating partial programs using the Z3 solver, enforcing a 24‑second timeout per instance in accordance with SMT‑COMP\footnote{SMT‑COMP is the annual international competition 
for Satisfiability Modulo Theories (SMT) solvers.} rules.
All the experiments are carried out with Python 3.9.19, Pytorch 2.4.0 and vLLM 0.6.1, running on NVIDIA H800 GPUs with CUDA 12.7.

\begin{figure*}[t]
 \centering
 \subfigure[DeepSeek-Coder on HumanEval\label{F:deepseek}]{
   \begin{tabular}{c}
     \includegraphics[width=1.9in]{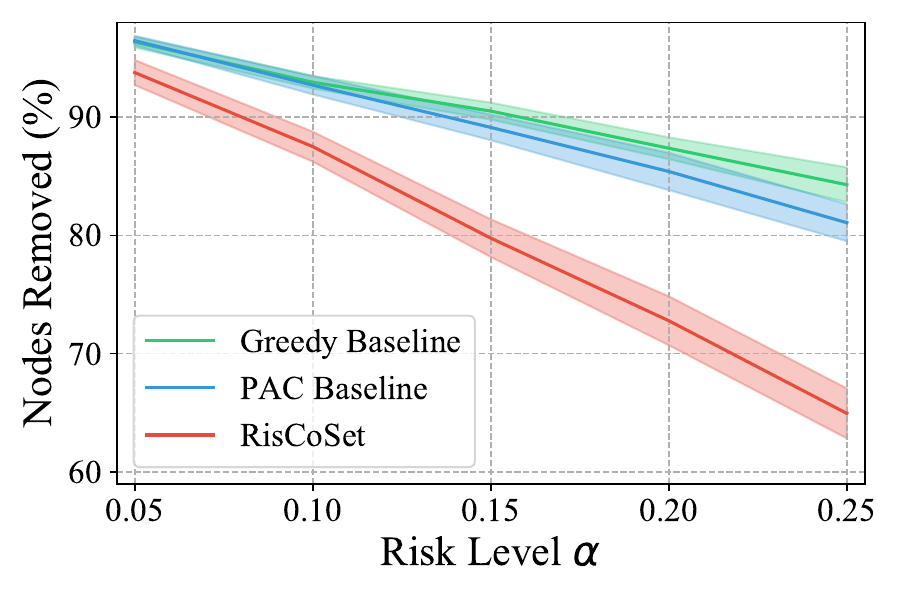} \\
     \vspace{-1em} \\
     \includegraphics[width=1.9in]{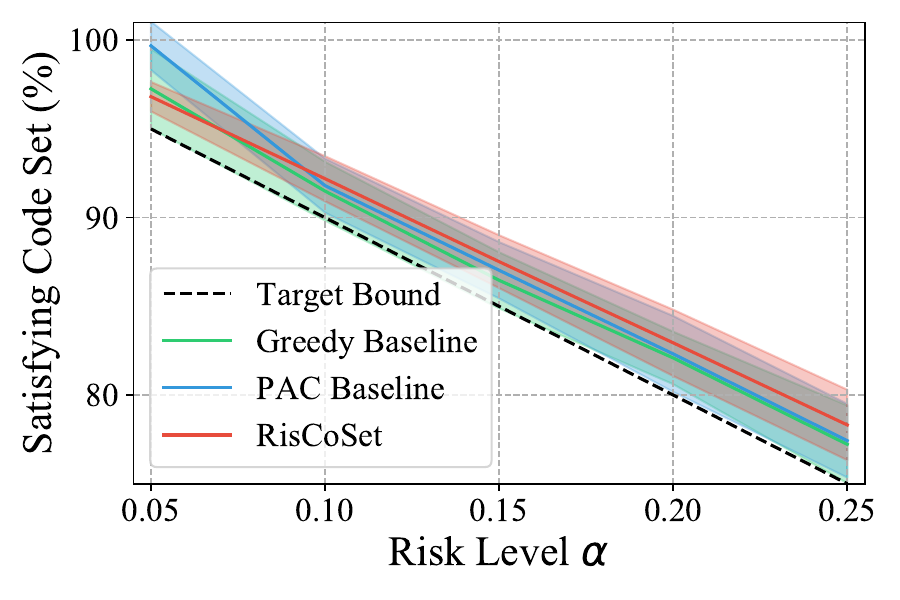}
   \end{tabular}
 }
 \subfigure[Qwen2.5-Coder on MBPP\label{F:qwen}]{
   \begin{tabular}{c}
     \includegraphics[width=1.9in]{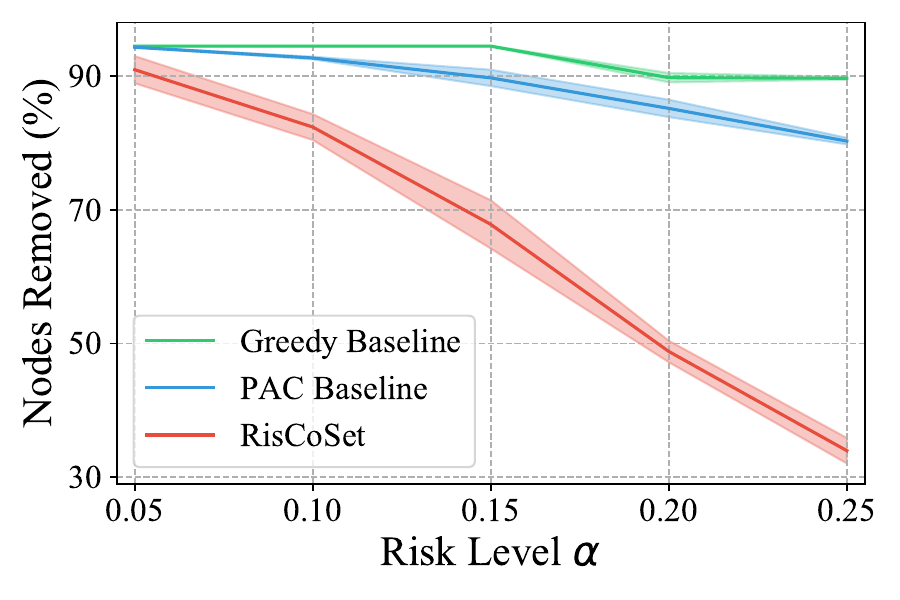} \\
     \vspace{-1em} \\
     \includegraphics[width=1.9in]{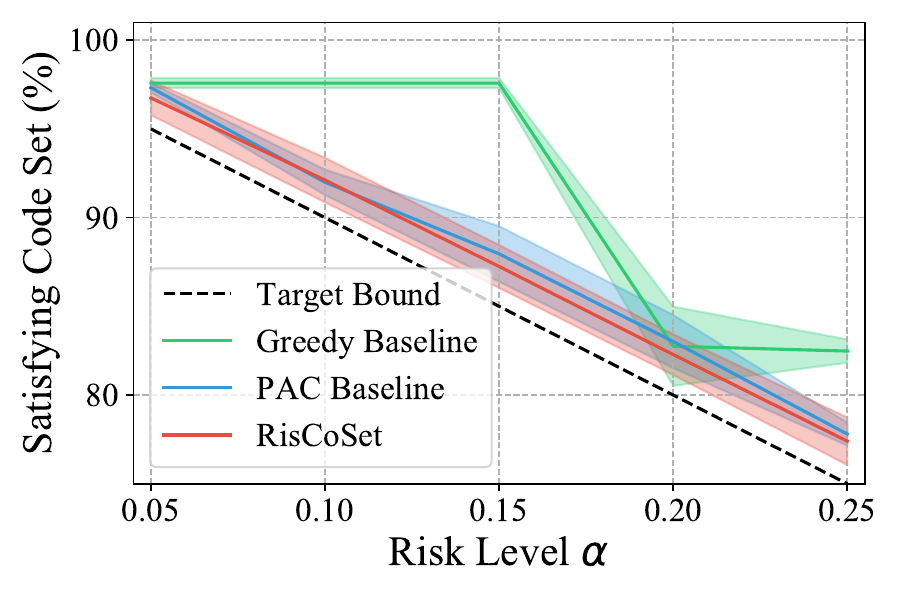}
   \end{tabular}
 }
 \subfigure[Llama3.1 on APPS\label{F:llama}]{
   \begin{tabular}{c}
     \includegraphics[width=1.9in]{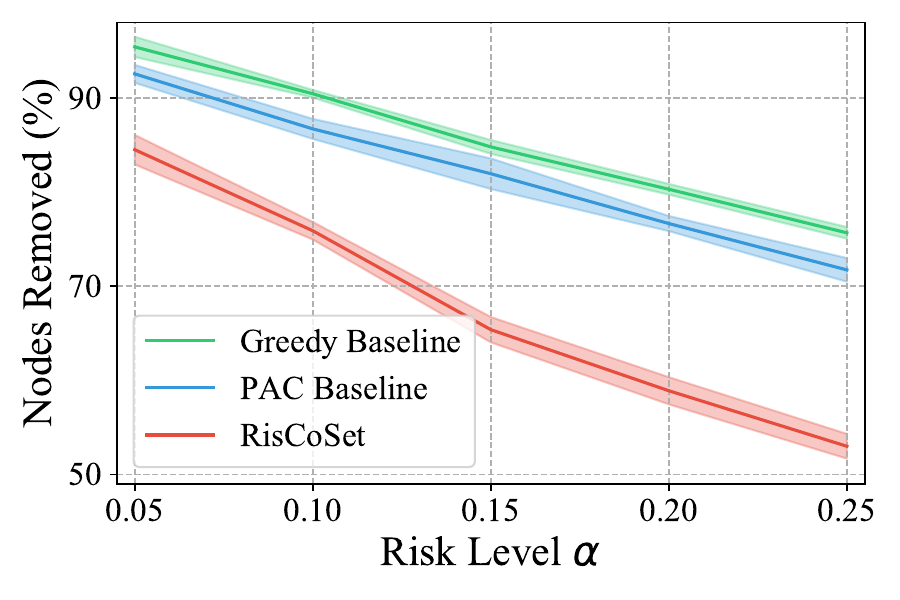} \\
     \vspace{-1em} \\
     \includegraphics[width=1.9in]{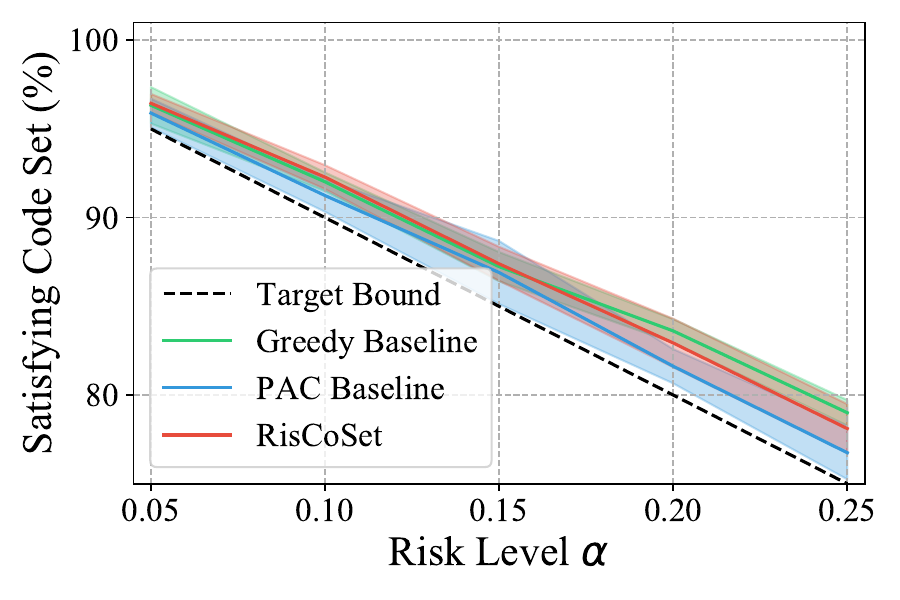}
   \end{tabular}
 }
 
 \caption{
 Percentage of node removals (top row) and satisfying code sets (bottom row) w.r.t. risk level $\alpha$. The results are the mean over 100 random splits.
 The smaller node removal is better, when the code set coverage exceeds the target bound $1-\alpha$.
 Our approach constructs prediction sets that remove significantly fewer nodes compared to baselines for three LLMs on all datasets.}\label{F:exp1}
 \vspace{-1.0em}
\end{figure*}

\begin{figure*}[t]
\begin{minipage}{0.63\linewidth}
\centering
\subfigure{\label{F:m-remove}
    \includegraphics[width=2.0in]{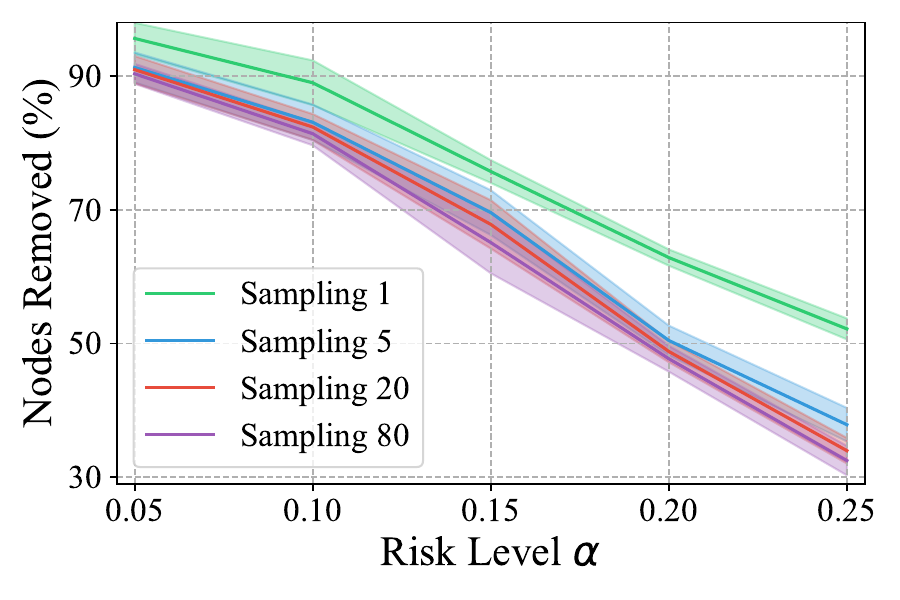}}
\subfigure{\label{F:m-set}
     \includegraphics[width=2.0in]{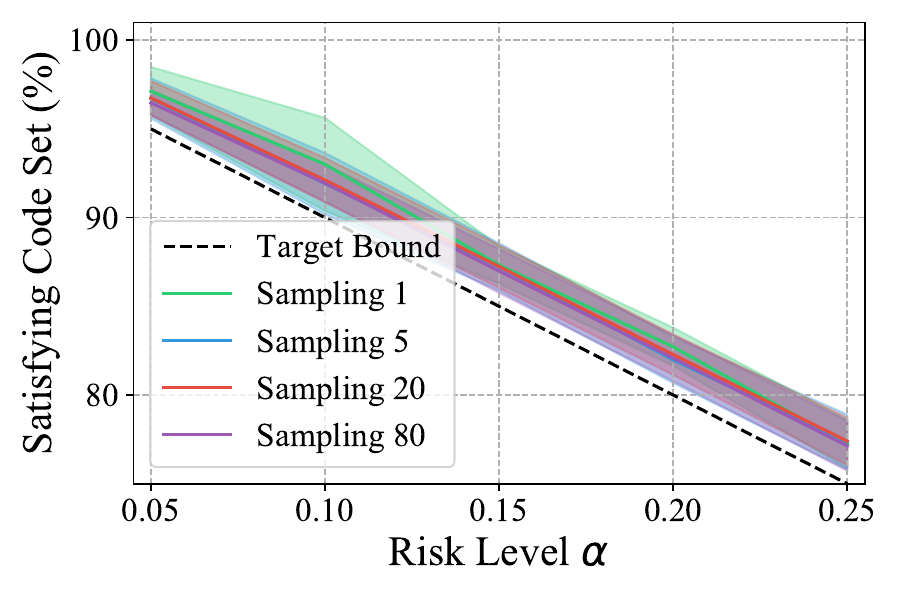}}
\caption{Parameter sensitivity analysis of sampling quantity $m$ w.r.t. risk level $\alpha$ on MBPP. 
The average results over 100 trials show that a larger value of $m$ leads to fewer node removals, while maintaining the required risk control.}
\label{F:class condition}
\end{minipage}
\hfill
\begin{minipage}{0.34\linewidth}
\centering
\vspace{1.4em}
\resizebox{\linewidth}{!}{
\begin{sc}
\begin{tabular}{lcccc}
\toprule
{Error $\epsilon$} & 0.05 & 0.1  & 0.2 & 0.3 \\
 \midrule
Removal 
& 71.8 & 50.6 & 27.8 & 3.39 \\
\midrule
\multirow{2}*{Coverage}
& 88.5 & 82.9 & 74.1 & 66.5 \\
& (85.5) & (81.0) & (72.0) & (63.0) \\
 \midrule
Save& 12.0 & 29.4 & 57.6 & 84.4 \\
\bottomrule
\end{tabular}
\end{sc}}
\vspace{5pt}
\captionof{table}{Percentage of node removals (\%), coverage of satisfying code set (\%), and fraction of calibration data saved from execution (\%) w.r.t. label error $\epsilon$ when $\alpha=0.1$ on MBPP.
The average results over 100 trials show that the coverage of the satisfying code set is above the target coverage (i.e., the values in ()), and our selective execution strategy effectively reduces the proportion of programs needed to execute.}
\label{T:selective verification}
\end{minipage} 
\vspace{-1.5em}
\end{figure*}


\subsection{Results}
Our main results are presented in Fig.~\ref{F:exp1}. 
For all plots, the solid line shows the average values over 100 trials, and the shaded regions indicate the standard deviation. Additional experimental results are reported in Appendix~\ref{A:exp}. 

\noindent\textbf{Validity of risk-controlling code generation.}
We implement our approach \name for different LLMs on three datasets.
The results in Fig.~\ref{F:exp1} illustrate, with varying risk levels $\alpha$, the percentage of nodes removed from the pruned AST (top row) and the coverage of constructed prediction sets that contain a correct program (bottom row). 

Our empirical findings align with Theorem~\ref{thm:guarantee} in practice, as the average code set coverage consistently exceeds the target bound of $1-\alpha$ (the dashed lines).
Across all datasets, our method produces prediction sets that remove significantly fewer AST nodes than the baselines. 
For instance, as shown in Fig.~\ref{F:qwen} and Fig.~\ref{F:llama}, our approach reduces node removals by 24.5\% and 20.3\% compared to the best competitor when $\alpha=0.15$ on the MBPP and APPS datasets, respectively.
Compared to the PAC baseline, the greedy baseline tends to remove more nodes, which is particularly evident on the MBPP dataset in Fig.~\ref{F:qwen}, due to its greedy pruning strategy.



For the code-specialized models, our approach performs better with Qwen2.5-Coder than with Deepseek-Coder. 
This may be attributed to Qwen2.5-Coder's stronger code generation capability~\cite{hui2024qwen2}, suggesting that our method can benefit increasingly from more powerful generative models. 
Moreover, the proposed method also achieves performance comparable to that of specialized code generators when applied to the general-purpose Llama3.1 in Fig.~\ref{F:llama}, showing its scalability for diverse LLMs.

The significant improvement of our method can be attributed to: (1) we define an optimization problem with fewer constraints, encouraging a better partial program solution with more remaining nodes; (2) our multi-label formalization allows for a more complete exploration of the potential correct program space for a specific LLM compared to baselines.

\noindent\textbf{Parameter sensitivity analysis.}
We next analyze the sensitivity of our method to the sampling quantity 
$m$. 
We sample $m$ additional programs for each task in the calibration set with $m=\{1,5,20,80\}$, and pick correct ones as new labels via verification.
The corresponding node removals and code set coverage on the MBPP dataset
are given in Fig.~\ref{F:m-remove} and Fig.~\ref{F:m-set}. 
We observe a clear trend that a larger $m$ helps reduce the number of pruned AST nodes while preserving the desired risk control. 
Insufficient exploration of the label space at $m=1$ leads to more node removals. 
Nevertheless, the performance gains diminish for $m> 5$, demonstrating that our method already performs well with a relatively small sample size. 

\noindent\textbf{Selective execution strategy.}
We evaluate the effectiveness of the proposed selective execution strategy when $\alpha=0.1$ and $m=80$ on the MBPP dataset.
In experiments, we set confidence level $\gamma=0.01$, sampling size $h$ as 10\% of total sample quantity $M$, and samping weights $\omega=\omega_i=1.0$, for any $i\in[M]$.
We report the average results with varying label error $\epsilon$ over 100 trials in Table~\ref{T:selective verification}.
The key metric for efficiency is the percentage of sample programs saved from execution at the cost of $\epsilon\%$ error in program labels.
First, the observed coverage of the satisfying code set always exceeds the target coverage, which controls risks, consistent with the theoretical guarantee in Proposition~\ref{prop:sampling}.
Second, our selective execution strategy largely reduces the proportion of verifications by running test cases. 
For instance, at $\epsilon=0.1$, it avoids test-case execution for nearly 30\% of generated programs, substantially lowering computational cost.

\section{Related Work} \label{sec:rel}

\textbf{Conformal prediction and risk control.}
\hide{
Uncertainty Quantification (UQ) aims to characterize predictive uncertainty in machine learning models~\cite{abdar2021review}. 
There exist several conventional UQ methods,
including Bayesian and approximate Bayesian methods~\cite{gal2016dropout}, post-hoc confidence calibration techniques~\cite{guo2017calibration}, and model-agnostic predictive intervals~\cite{romano2019conformalized}.
However, most of these methods lack rigorous statistical guarantees on coverage, particularly under non-i.i.d. conditions.
}
Conformal Prediction (CP)~\cite{vovk2005algorithmic} is proposed to provide a framework for uncertainty quantification with distribution-free coverage guarantees~\cite{angelopoulos2023conformal}. 
CP has been applied to a range of tasks, including image classification~\cite{sadinle2019least}, object detection~\cite{teng2022predictive}, and regression prediction~\cite{kato2023review}. 

These traditional CP methods, designed for classification or regression, are ill-suited for generative tasks due to the finite output space.
Therefore, some risk-controlling frameworks are proposed for generation tasks~\cite{bates2021distribution,angelopoulos2022conformal}.
However, these methods, such as PAC prediction sets~\cite{park2020pac}, require an additional monotonicity constraint regarding risk, making them difficult to adapt to multi-label problems.
To this end, the Learn Then Test (LTT) framework~\cite{angelopoulos2025learn} separates model training from statistical validation and uses multiple hypothesis testing to control risk. 
Without imposing the monotonicity assumption on the output space, LTT is particularly applicable to multi-label settings like code generation.



\textbf{Uncertainty estimation on LLMs.} 
Existing work~\cite{ Kadavath2022LanguageM, vasconcelos2023generation} has shown that the probability scores produced by LLMs may be poorly calibrated. 
Current uncertainty calibration methods for LLMs mainly focus on prompting~\cite{lin2022teachingmodelsexpressuncertainty} or post-hoc logit calibration (e.g., temperature scaling~\cite{guo2017calibration} and self-consistency~\cite{zhou2025a}), rather than providing theoretical guarantees.

\hide{
Most existing approaches to uncertainty estimation for LMs focus on confidence estimation rather than formal guarantees. One class of methods introduces prompting strategies or linguistic cues to elicit self-reported confidence from the model~\cite{lin2022teachingmodelsexpressuncertainty, tian2023just, xiong2023can}. Another line of work applies post-hoc calibration techniques to model logits~\cite{wang2025understandingcharacteristicscodegeneration, huang2023look}, including temperature scaling~\cite{guo2017calibration} and self-consistency-based confidence calibration~\cite{zhou2025a, kuhn2023semantic, cheng2024relic, mitchell2022enhancingselfconsistencyperformancepretrained, wang2023selfconsistencyimproveschainthought} with the goal of improving probabilistic calibration without modifying the underlying model. 
}

Additionally, recent work has investigated the application of CP to LLMs to achieve rigorous statistical guarantees on the correctness of generated outputs. 
\hide{
Conformal inference provides a general framework for transforming the predictions of an arbitrary model into prediction sets that contain the true outcome with high probability, without distributional assumptions. Several recent studies leverage this framework to ensure that, with high confidence, at least one element in a conformal set of LLM-generated responses satisfies a desired property, such as factual correctness or validity, by defining task-specific nonconformity scores over generated outputs~\cite{kumar2023conformalpredictionlargelanguage, ye2024benchmarkingllmsuncertaintyquantification}. Recent work extends conformal inference for language models from response-level selection to identifying subsets of generated text that satisfy high-probability correctness guarantees, addressing limitations in conditional validity and utility through adaptive conformal procedures and improved scoring functions~\cite{mohri2024languagemodelsconformalfactuality, cherian2024largelanguagemodelvalidity}. 
}
One line of work applies it to produce a set of candidate responses that, with high probability, contains a valid answer~\cite{kumar2023conformal,ye2024benchmarking}. 
This task‑level approach, however, has limited practicality because evaluating a large number of candidates is computationally expensive. 
Alternatively, sample‑level methods apply conformal prediction internally to a single LLM output, identifying its sound components as a prediction set~\cite{quach2024conformallanguagemodeling,cherian2024largelanguagemodelvalidity}.
However, these approaches rely on the assumption of near-independence among components. While this may hold in NLP contexts, it breaks down for structured domains like code, where components are systemically interdependent~\cite{casalnuovo2019studying,chen2022codet}.

\textbf{Reliable code generation.} 
Ensuring code correctness has long been a central concern in software engineering. 
Conventional methods include
testing (e.g., unit testing and fuzzing)~\cite{2022fuzzing}, static analysis~\cite{2021parkasurvey}, and formal verification~\cite{Baier2008PrinciplesOM}. 
Nevertheless, such methods are inadequate for LLM‑generated code, as they cannot effectively leverage the probabilistic nature and uncertainty of modern language models.

The advancement of LLMs for code generation has spurred interest in reliability methods based on statistical, rather than purely deductive, guarantees~\cite{chen2022codet}.
For example, \citet{guo2021learning} propose to complete the uncertain code parts within the program
sketches to improve the pass rate.
Recently, research has explored constructing PAC prediction sets for code generation~\cite{khakhar2023pac}. 
By representing programs via ASTs and building prediction sets defined by partial programs, this method delivers probabilistic guarantees that the code set contains the ground-truth program.
Different from the previous work, our approach introduces LTT framework to eliminate the need for monotonicity constraints and naturally accommodates the multi-label essence of code.

\section{Conclusion} \label{sec:con}

This paper addresses the challenge of unreliable LLM outputs in structured generation tasks like code. 
We introduce a method for constructing statistically guaranteed prediction sets by reformulating the problem as multi-label classification, eliminating monotonicity constraints. 
Our approach applies the LTT framework to ensure a candidate set of programs contains a correct solution with high probability. 
Furthermore, a selective execution strategy is proposed to augment label sets for efficient and effective calibration. 
Extensive experiments across multiple models and benchmarks confirm the method's effectiveness.
While our focus is on code generation,
we believe the proposed method can be adapted to other structured prediction problems.

\noindent\textbf{Limitations.} 
Our method requires executing test cases to augment labels during calibration, which adds computational overhead relative to sample-free baselines.
However, this overhead is confined to the calibration phase and does not impact inference speed on test data.
To further reduce the cost, we introduce a selective execution strategy that allows a bounded label-error rate in exchange for reduced executions. 
The strategy’s effectiveness is supported theoretically (Proposition~\ref{prop:sampling}) and shown empirically (cf. Table~\ref{T:selective verification}).

\nocite{langley00}

\clearpage 

\section*{Broader Impact}
This paper presents work to advance the field of conformal prediction and code generation. Our work has many potential societal consequences, none of which we feel must be specifically highlighted here.

\bibliography{refs}
\bibliographystyle{icml2026}

 
\newpage
\appendix
\onecolumn



\section{Proofs}

\subsection{Proof of Lemma~\ref{lemma:p-value}}
\begin{proof} 
Let $A=\mathrm{Bin}(n,\alpha)$ and $B= n\widehat{R}_n^s(\lambda_k)$. Since the risk in Eq.~\eqref{eq:risk} is an average of binary outcomes, $B$ is also binomial with sample size $n$ and some (unknown) success probability $\alpha'$.
Let $F_A$ and $F_B$ denote the CDFs of $A$ and $B$, respectively. 
Under the null hypothesis $\mathcal{H}_k:R^s(\lambda_k)>\alpha$ on test instances, 
we have $\alpha'>\alpha$.
This implies that $B$ stochastically dominates $A$, i.e., $F_A(t)\geq F_B(t)$ for all $t$.
Let $C=p_k^{BT}=eF_A(B)$, then
\begin{equation*}
\begin{aligned}
\Prob(C\leq c)&=\Prob(F_A(B)\leq \frac{c}{e})\\
&\leq \Prob(F_B(B)\leq \frac{c}{e}).\\
\end{aligned}
\end{equation*}
By the definition of a CDF, we have
\begin{equation*}
\begin{aligned}
\Prob(F_B(B)\leq \frac{c}{e})&= \Prob(B\leq F_B^{-1}(\frac{c}{e}))\\
&= F_B\!\left(F_B^{-1}\!\left(\frac{c}{e}\right)\right)\\
&=\frac{c}{e}\leq c.
\end{aligned}
\end{equation*}
Therefore, under $\mathcal{H}_k$, $p_k^{BT}$ is super-uniform and hence a valid $p$-value.

Furthermore, by Bentkus inequality~\cite{bentkus2004hoeffding}, we obtain
\begin{equation*}
\Prob(\widehat{R}_n^s(\lambda_k)\leq t) \leq e\Prob\big(\mathrm{Bin}(n,\alpha)\leq \lceil nt\rceil\big).
\end{equation*}  
This inequality is nearly tight and $nt=\lceil nt\rceil$, because $n\widehat{R}_n^s(\lambda_k)$ is binomial as discussed above.
Therefore, $p_k^{BT}$ is a near-tight upper bound for the CDF of $\widehat{R}_n^s(\lambda_k)$.
\end{proof}

\subsection{Proof of Theorem~\ref{thm:guarantee}}

\begin{proof}
Since the enhanced calibration set is i.i.d., the set losses $\mathbf{1}\{\mathcal{T}_\lambda(X_i)\cap \tilde{Y}_i=\emptyset\}$ in Eq.~\eqref{eq:risk} are also i.i.d. 
By Lemma~\ref{lemma:p-value}, each $p_k^{BT}$ is a valid $p$-value under the null hypothesis $\mathcal{H}_k:R^s(\lambda_k)>\alpha$. 
Given a set-valued function $\mathcal{T}_{\lambda}$, and a FWER-controlling algorithm at level $\delta$, we can apply Theorem~\ref{thm:LTT} to identify $\Lambda_{\mathrm{valid}}$ such that

\begin{equation*}
\Prob ( \sup_{\lambda\in\Lambda_{\mathrm{valid}}}\{R^s(\lambda)\} \leq \alpha\mid \mathcal{D}_\mathrm{cal}  ) \geq 1-\delta
\end{equation*}
on test programs. Therefore, we have
\begin{equation*}
\Prob ( R^s(\hat{\lambda}) \leq \alpha\mid \mathcal{D}_\mathrm{cal}  ) \geq 1-\delta,
\end{equation*}
as Eq.~\eqref{eq:thm1} holds for any $\lambda\in \Lambda_{\mathrm{valid}}$. 
In particular, it holds for $\hat{\lambda}$ selected by Eq.~\eqref{eq:lambda}.

\end{proof}

\subsection{Proof of Lemma~\ref{lemma:Hbound}} 
\begin{proof}
We first analyze the properties of the random variables $Z_j(u)$ constructed in Algorithm~\ref{alg:EUB}.
The random variables $Z_1(u),\dots,Z_h(u)$ are independent, because index $i_j$ is uniformly sampled from $[M]$, and $\xi_{i_j}$ is independent of other random variables.
Since binary loss $\ell(S_{i_j})$ is bounded by $[0,1]$, if $\xi_{i_j}=1$, we have

\begin{equation*}
Z_j=\frac{\ell(S_{i_j})}{\omega_{i_j}}\leq \frac{1}{\min_j \omega_{i_j}} = \frac{1}{\omega_\mathrm{min}}.
\end{equation*}
If $\xi_{i_j}=0$, $Z_j=0$.
Let $R=1/\omega_\mathrm{min}$. Then $Z_j\in[0,R]$ for all $j\in[h]$.

We denote $\bar{Z}(u)=\frac{1}{h}\sum_{j=1}^hZ_j(u)$, and compute the expected $\bar{Z}(u)$ as

\begin{equation*}
\begin{aligned}
\E[\bar{Z}(u)] &= \frac{1}{h}\sum_{j=1}^h\E[Z_j(u)]\\
&= \frac{1}{h}\sum_{j=1}^h\Big[\omega_{i_j}\cdot\frac{\ell(S_{i_j})}{\omega_{i_j}}\mathbf{1}\{U_{i_j}\leq u\} + (1-\omega_{i_j})\cdot 0\Big]\\
&= \frac{1}{h}\sum_{j=1}^h\ell(S_{i_j})\mathbf{1}\{U_{i_j}\leq u\} = L(u).
\end{aligned}
\end{equation*}

By Hoeffding's inequality and $L(u)=\E[\bar{Z}(u)]$, we obtain

\begin{equation}\label{eq:Hoeffding}
\Prob\big(\lvert \hat{\mu}_Z(u)-L(u) \rvert \geq t \big) \leq 2\exp\Big(-\frac{2ht^2}{R^2} \Big),
\end{equation} 
where $t>0$ is arbitrary.
Setting the right-hand side equal to $\gamma$ yields

\begin{equation*}
2\exp\Big(-\frac{2ht^2}{R^2} \Big) = \gamma\quad \Rightarrow \quad t = R\sqrt{\frac{\log(2/\gamma)}{2h}}.
\end{equation*}
Substituting this value of $t$ into Eq.~\ref{eq:Hoeffding} yields

\begin{equation*}
\Prob\Big(\hat{\mu}_Z(u)-L(u) \leq -R\sqrt{\frac{\log(2/\gamma)}{2h}} \Big) \leq \frac{\gamma}{2}\leq \gamma.
\end{equation*} 

Equivalently, 

\begin{equation*}
\Prob\Big(L(u) \leq \hat{\mu}_Z(u) + R\sqrt{\frac{\log(2/\gamma)}{2h}} \Big) \geq 1 - \gamma.
\end{equation*} 

Let $\hat{L}^u(\gamma)=\hat{\mu}_Z(u) + R\sqrt{\log(2/\gamma)/2h}$. Then
\begin{equation*}
\Prob(L(u)\leq \hat{L}^u(\gamma)) \geq 1-\gamma,
\end{equation*} 
which completes the proof. 
\end{proof}

\subsection{Proof of Proposition~\ref{prop:sampling}} \label{A:prop1}

\begin{proof}
We prove Proposition~\ref{prop:sampling}.  
In Section~\ref{sec:sv}, we select $\hat{u} = \max\{u\in[0,1]:\hat{L}^u(\gamma)\leq\epsilon\}$, and by Lemma~\ref{lemma:Hbound} we have the following upper bound:

\begin{equation}\label{eq:upper bound}
L(u)\leq \hat{L}^u(\gamma)=\epsilon.
\end{equation} 

According to Theorem~\ref{thm:guarantee}, for a test instance $(X_{n+1},Y_{n+1})$, we have
\begin{equation*}
\Prob \big(\Prob(\mathcal{T}_{\hat{\lambda}}(X_{n+1})\cap Y_{n+1}=\emptyset)\leq\alpha)  \big) \geq 1-\delta.
\end{equation*}

In this guarantee, let $Y_{n+1}=Y_{n+1}^t\cup Y_{n+1}^f$, where $Y_{n+1}^t$ and $Y_{n+1}^f$ are disjoint sets and $\lvert Y_{n+1}^t \lvert/\lvert Y_{n+1} \lvert\geq 1-\epsilon$ according to Eq.~\eqref{eq:upper bound}. 
For any element $y\in Y_{n+1}$, we assume $y\in \mathcal{T}_{\hat{\lambda}}(X_{n+1})$ independently with equal probability $p$. Then,
\begin{equation*}
\Prob(\mathcal{T}_{\hat{\lambda}}(X_{n+1})\cap Y_{n+1}=\emptyset)=(1-p)^{\lvert Y_{n+1} \lvert} \leq \alpha,
\end{equation*} 

Consequently,
\begin{equation*}
\begin{aligned}
\Prob(\mathcal{T}_{\hat{\lambda}}(X_{n+1})\cap Y_{n+1}^t=\emptyset) &= (1-p)^{\lvert Y_{n+1}^t \lvert}=\Big((1-p)^{\lvert Y_{n+1} \lvert}\Big)^{\frac{\lvert Y_{n+1}^t \lvert}{\lvert Y_{n+1} \lvert}}\\
&\leq  \alpha^{1-\epsilon}
\leq 1-(1-\alpha)(1-\epsilon)\\
&=\alpha+\epsilon(1-\alpha)
\end{aligned}
\end{equation*} 
holds with probability at least $(1-\gamma)\cdot(1-\delta)$ (since the bounds involving $\epsilon$ and $\alpha$ hold with probabilities at least $1-\gamma$ and $1-\delta$, respectively). 
Additionally, we use the Bernoulli inequality for $\alpha,\epsilon\in(0,1)$ in the last inequality.
\end{proof}

\subsection{Proof of Proposition~\ref{prop:CLT}}
\label{A:proof for CLT}
\begin{proof}
Let $h$ be the number of sample programs from LLMs.
By the Lindeberg–Feller central limit theorem~\cite{billingsley1961lindeberg}, we obtain
\begin{equation*}
\frac{\sqrt{h}(L(u)-\hat{\mu}_Z(u))}{\sigma_Z(u)}\rightarrow\mathcal{N}(0,1)
\end{equation*} 
as $h\rightarrow \infty$, since the $Z_j(u)$ are i.i.d.
By the weak law of large numbers, we have $\hat{\sigma}_Z(u) \xrightarrow{p} \sigma_Z(u),$
and thus,
an application of Slutsky's theorem gives
\begin{equation*}
\Prob\Big(\frac{L(u)-\hat{\mu}_Z(u)}{\sqrt{\hat{\sigma}^2_Z(u)/h}} \leq z_{1-\gamma}\Big)\rightarrow 1-\gamma,
\end{equation*} 
because $z_{1-\gamma}$ is the $(1-\gamma)$-quantile of the standard normal distribution.
Equivalently, when $h$ approaches positive infinity, we have
\begin{equation*}
\Prob\Big(L(u) \leq \hat{\mu}_Z(u) + z_{1-\gamma}\sqrt{\hat{\sigma}^2_Z(u)/h}\Big)\rightarrow 1-\gamma,
\end{equation*} 
which completes the proof. 
\end{proof}

\section{FWER-controlling algorithm}\label{A:FWER}

After obtaining valid p-values by Eq.~\eqref{eq:p-value}, we discuss procedures for computing $\Lambda_{\mathrm{valid}}=\mathcal{A}(p_1^{BT},\dots,p_N^{BT})$ on structured risks. 
Since evaluating multiple hypotheses simultaneously increases the chance of false positives, we require family-wise error rate (FWER) correction.

\begin{definition}
(FWER control) Let $\mathcal{A}(p_1^{BT},\dots,p_N^{BT})$ be an algorithm that outputs a subset $\Lambda_{\mathrm{valid}}\subseteq\Lambda$. We say $\mathcal{A}$ controls the FWER at $(\alpha,\delta)$-level if
\begin{equation*}
\Prob \Big( \sup_{\lambda \in \Lambda_{\mathrm{valid}}}\Prob(R^s(\lambda)\leq\alpha)  \Big) \geq 1-\delta,
\end{equation*}
where the supremum over $\emptyset$ is defined as $-\infty$.
\end{definition}

We provide three alternative FWER-controlling algorithms: the \textit{Bonferroni method}, the \textit{Holm--Bonferroni method}~\cite{holm1979simple}, and \textit{fixed sequence testing}~\cite{bauer1991multiple}.
The Bonferroni method defines $\Lambda_{\mathrm{valid}}:=\{\lambda_k\mid p_k^{BT}\leq \delta/N\}_{k=1}^N$.
The Holm--Bonferroni method is an improved version of the Bonferroni method.
It sorts p-values as $p_{(1)}\leq\dots\leq p_{(N)}$ and $\Lambda_{\mathrm{valid}}:=\{\lambda_{(1)},\dots,\lambda_{(k)}\}$, where $k\in[N]$ is the minimum index satisfying $p_{(k)}\leq\delta/(N-k+1)$ and $p_{(k+1)}>\delta/(N-k)$.

For fixed sequence testing, we leverage the prior knowledge that partial programs with more nodes removed correspond to larger prediction sets. 
This implies that smaller values of $\lambda$ are more likely to satisfy the target risk. 
In the extreme case where $\lambda=0$, the partial program prunes the entire tree, trivially achieving a risk of zero in Eq.~\eqref{eq:risk}. 
Therefore, a heuristic correction strategy is to test $\lambda$ in increasing order and stop at the first accepted hypothesis. 
To broaden the search, this process can be repeated from different predefined starting points (e.g., a coarse uniform grid in $\Lambda$). 
Let the parameter set $\Lambda=\{\lambda_1,\dots,\lambda_N\}$ be in descending order (i.e., $\lambda_1\geq\dots\geq\lambda_N$). The full fixed sequence testing procedure for code is summarized in Algorithm~\ref{alg:fst}.



\begin{algorithm} 
\caption{Fixed sequence testing} 
\label{alg:fst}
\begin{algorithmic}[1]
\STATE {\bfseries Input:} parameter set $\Lambda=\{\lambda_1,\dots,\lambda_N\}$; error level $\delta\in(0,1)$; p-values $\{p_k^{BT}\}_{k=1}^N$; predefined starting points $\mathcal{I}\subset[N]$;
\STATE {\bfseries Output:} reject set $\Lambda_{\mathrm{valid}}$.
\STATE Initialize a reject set $\Lambda_{\mathrm{valid}}=\emptyset$;
\FOR {$k\in \mathcal{I}$}
\IF[\emph{Avoid duplicates}]{$\lambda_k\notin\Lambda_{\mathrm{valid}}$} 
\WHILE{$p_k^{BT}\leq \delta/\lvert\mathcal{I} \rvert$}
\STATE $\Lambda_{\mathrm{valid}}=\Lambda_{\mathrm{valid}}\cup\{\lambda_k\}$; 
\STATE $k=k-1$;
\ENDWHILE
\ENDIF
\ENDFOR
\end{algorithmic}
\end{algorithm}



\section{Asymptotic upper bound based on CLT}\label{A:CLT bound}

In addition to the non-asymptotic error upper bound used in Lemma~\ref{lemma:Hbound}, we also consider an asymptotic bound for settings where the number of sampled programs $M$ is large.

In Section~\ref{sec:sv}, we estimate an unbiased upper bound for $L(u)$ by constructing a confidence interval for the average of $\{Z_j(u)\}_{j=1}^h$.
Based on the central limit theorem (CLT), we obtain the following asymptotic upper bound.
\begin{proposition}[Asymptotic upper bound]\label{prop:CLT}
Given a threshold $u$ and the sample size $h$, let $Z_j(u)$ be the i.i.d. random variables defined in Algorithm~\ref{alg:CLT} for $j\in [h]$.
Define the error upper bound based on CLT as

\begin{equation*}
\hat{L}^u(\gamma):=\hat{\mu}_Z(u) + z_{1-\gamma}\cdot\frac{\hat{\sigma}_Z(u)}{\sqrt{h}},
\end{equation*}
where $\hat{\mu}$ and $\hat{\sigma}$ denote the empirical mean and standard deviation of $\{Z_j(u)\}_{j=1}^h$, respectively, and $z_{1-\gamma}$ is the $(1-\gamma)$-quantile of the standard normal distribution.
Then, we have

\begin{equation*}
\mathop{\lim\inf} \limits_{m\rightarrow\infty}\Prob(L(u)\leq \hat{L}^u(\gamma)) \geq 1-\gamma.
\end{equation*} 
\end{proposition}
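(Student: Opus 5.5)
We prove Proposition~\ref{prop:CLT} by a studentized central limit argument for the fixed threshold $u$. The sampling scheme does the work: the indices $i_1,\dots,i_h$ are drawn uniformly with replacement from $[M]$, and each Bernoulli variable $\xi_{i_j}$ is drawn independently. Conditional on the calibration sample and the scores $\{(S_i,U_i)\}_{i=1}^M$, the variables $Z_1(u),\dots,Z_h(u)$ are therefore i.i.d. The first step is to record this and to compute the first two moments.

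For the mean, $\E[\xi_{i_j}/\omega_{i_j}\mid i_j]=1$, so $\E[Z_j(u)]=\frac{1}{M}\sum_{i=1}^M \ell(S_i)\mathbf{1}\{U_i\le u\}$. This matches $L(u)$; we will be careful to use the same convention, $\le$ versus $<$, in both definitions. For the variance, $Z_j(u)\in[0,1/\omega_{\min}]$, so $\sigma_Z^2(u)=\mathrm{Var}(Z_j(u))<\infty$.

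We then split into cases.

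\textbf{Case $\sigma_Z(u)=0$.} Here $Z_j(u)=L(u)$ almost surely. Then $\hat\mu_Z(u)=L(u)$ and $\hat\sigma_Z(u)=0$, so $\hat L^u(\gamma)=L(u)$ and the event $\{L(u)\le\hat L^u(\gamma)\}$ holds with probability one.

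\textbf{Case $\sigma_Z(u)>0$.} The classical Lindeberg--L\'evy CLT for i.i.d.\ bounded variables gives
\[
\sqrt{h}\,\bigl(\hat\mu_Z(u)-L(u)\bigr)/\sigma_Z(u)\Rightarrow\mathcal N(0,1).
\]
The weak law of large numbers, applied to $Z_j$ and $Z_j^2$, gives $\hat\sigma_Z(u)\to\sigma_Z(u)$ in probability. Slutsky's theorem then yields
\[
T_h:=\sqrt{h}\,\bigl(L(u)-\hat\mu_Z(u)\bigr)/\hat\sigma_Z(u)\Rightarrow\mathcal N(0,1).
\]
Here we use the symmetry of the normal law to flip the sign, and we define $T_h$ arbitrarily on the event $\{\hat\sigma_Z(u)=0\}$, whose probability vanishes. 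The limit distribution is continuous, so the Portmanteau theorem gives
\[
\Prob(T_h\le z_{1-\gamma})\to\Phi(z_{1-\gamma})=1-\gamma.
\]
The event $\{T_h\le z_{1-\gamma}\}$ is exactly $\{L(u)\le\hat L^u(\gamma)\}$, so the $\liminf$ is at least $1-\gamma$. Since the statement is phrased with the limit variable $m$, we read it as the regime $h\to\infty$; for instance $h$ is a fixed fraction of $M$ as in the experiments, and $M$ grows with $m$.

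The main obstacle is this asymptotic regime. If $M$ grows together with $h$, the target $L(u)$ and the law of $Z_j(u)$ change with $h$, so we are in a triangular-array setting and the i.i.d.\ CLT no longer applies directly. In that case we would invoke the Lindeberg--Feller CLT. Boundedness by $1/\omega_{\min}$ makes the Lindeberg condition automatic, provided $\sigma_Z(u)$ stays bounded away from zero along the sequence. Without that lower bound, we would pass to subsequences: along those where the variance vanishes we argue as in the degenerate case, and along those where it does not we apply the CLT argument above. Either way the $\liminf$ bound follows.
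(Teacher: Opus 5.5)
Your main argument treats the reading the paper actually proves: $Z_1(u),\dots,Z_h(u)$ i.i.d.\ with a fixed law and $h\to\infty$. It follows the paper's route (CLT for $\hat\mu_Z(u)$, WLLN for $\hat\sigma_Z(u)$, Slutsky, then the normal quantile). Your separate handling of $\sigma_Z(u)=0$ also fixes a point the paper skips, since its proof divides by $\sigma_Z(u)$ without checking that it is positive.

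The step that fails is the subsequence argument at the end of your last paragraph. On a subsequence where $\sigma_Z(u)\to 0$ but $\sigma_Z(u)>0$, you cannot ``argue as in the degenerate case'', because that argument needs $Z_j(u)=L(u)$ almost surely. The conclusion can in fact fail there. Take $\omega_i\equiv 1$ and let $M$ grow with $h$ so that $L(u)=c/h$ for a fixed $c>0$. Then $Z_j(u)\sim\mathrm{Bern}(c/h)$, and with probability $(1-c/h)^h\to e^{-c}$ every $Z_j(u)$ is zero. This gives $\hat\mu_Z(u)=\hat\sigma_Z(u)=0$ and $\hat L^u(\gamma)=0<L(u)$. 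For $c<\log(1/\gamma)$ the coverage is therefore asymptotically at most $1-e^{-c}<1-\gamma$.

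For a triangular array the correct dichotomy is $h\,\sigma_Z^2(u)\to\infty$ versus $\sigma_Z(u)=0$ exactly.
\begin{itemize}
\item When $h\,\sigma_Z^2(u)\to\infty$, Lindeberg holds because $\lvert Z_j(u)-L(u)\rvert\le 1/\omega_{\min}$, with $\omega_{\min}$ bounded away from zero.
\item In that case $\hat\sigma_Z(u)/\sigma_Z(u)\to 1$ in probability, by Chebyshev applied to $\frac{1}{h}\sum_j (Z_j(u)-L(u))^2$, whose variance relative to $\sigma_Z^4(u)$ is at most $1/(\omega_{\min}^2\, h\,\sigma_Z^2(u))$.
\end{itemize}
The Poisson-type regime in between must be excluded by hypothesis. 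So either keep the fixed-law reading, as the paper does, or add an assumption such as $h\,\sigma_Z^2(u)\to\infty$. Your closing claim that ``either way the $\liminf$ bound follows'' is not true as stated.
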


The proof is provided in Appendix~\ref{A:proof for CLT}, and the overall procedure is described in Algorithm~\ref{alg:CLT}.

\begin{algorithm} 
\caption{Selective Verification with Asymptotic Error Bound} 
\label{alg:CLT}
\begin{algorithmic}[1]
\STATE {\bfseries Input:} unlabeled sampling programs $\{X_i\}_{i=1}^M$; uncertainty scores $\{U_i\}_{i=1}^M$; test cases $\{C_i\}_{i=1}^M$; sampling weights $\{\omega_i\}_{i=1}^M$; sampling size $h$; confidence level $\gamma$;
\STATE {\bfseries Output:} error upper bound $\hat{L}^u(\gamma)$.
\FOR {$j=1,\dots,h$}
\STATE Sample an index $i_j\sim \mathrm{Unif}([M])$;
\STATE Sample a random variable $\xi_{i_j}\sim \mathrm{Bern}(\omega_{i_j})$;
\IF{$\xi_{i_j}=1$}
\STATE Run the test case $C_{i_j}$ to get $S_{i_j}$;
\STATE Compute the importance-weighted error using $Z_j=\ell(S_{i_j})/\omega_{i_j}$;
\ELSE
\STATE $Z_j=0$;
\ENDIF
\ENDFOR
\STATE Given a threshold $u$, define $Z_j(u)=Z_j\cdot\mathbf{1}\{U_{i_j}\leq u\}$ for $j\in[h]$;
\STATE $\hat{\mu}_Z(u)\leftarrow \frac{1}{h}\sum_{j=1}^{h}Z_j(u)$;
\STATE $\hat{\sigma}_Z(u)\leftarrow \sqrt{\frac{1}{h}\sum_{j=1}^{h}(Z_j(u)-\hat{\mu}_Z(u))^2}$;
\STATE $z_{1-\gamma}\leftarrow$ $(1-\gamma)$-quantile of the standard normal distribution;
\STATE $\hat{L}^u(\gamma)\leftarrow \hat{\mu}_Z(u) + z_{1-\gamma}\cdot\frac{\hat{\sigma}_Z(u)}{\sqrt{h}}$.
\end{algorithmic}
\end{algorithm}

\begin{figure*}[t]
 \centering
 \subfigure[Coverage distribution]{\label{F:coverage}
    \includegraphics[width=2.1in]{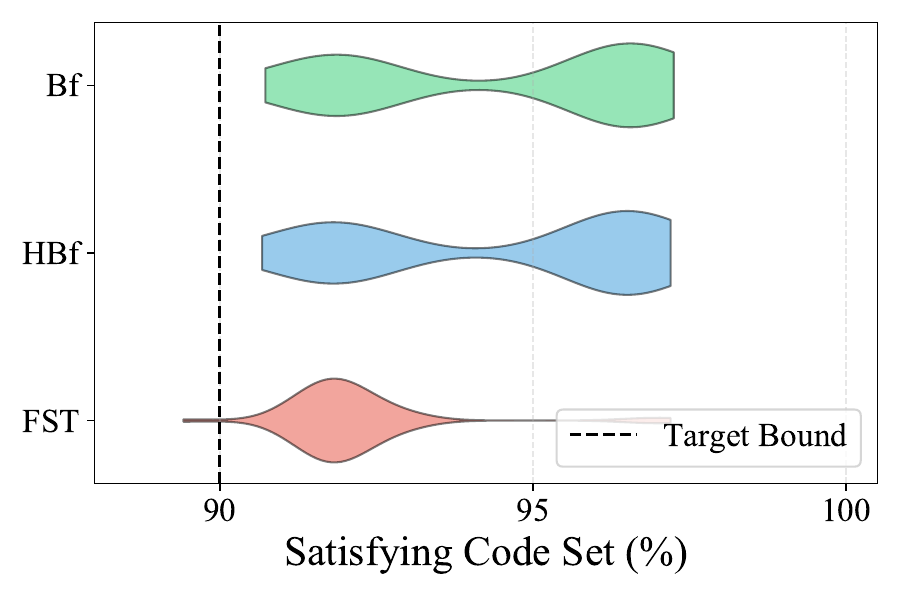}}
 \subfigure[Sensitivity analysis on $t_{\max}$]{\label{F:t-remove}
     \includegraphics[width=2.1in]{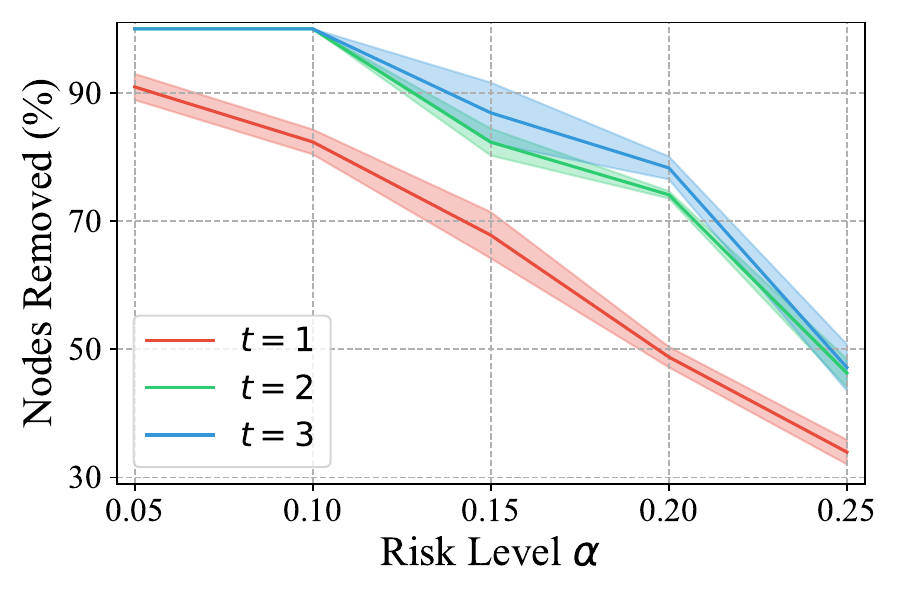}}
 \subfigure[Sensitivity analysis on $t_{\max}$]{\label{F:t-set}
    \includegraphics[width=2.1in]{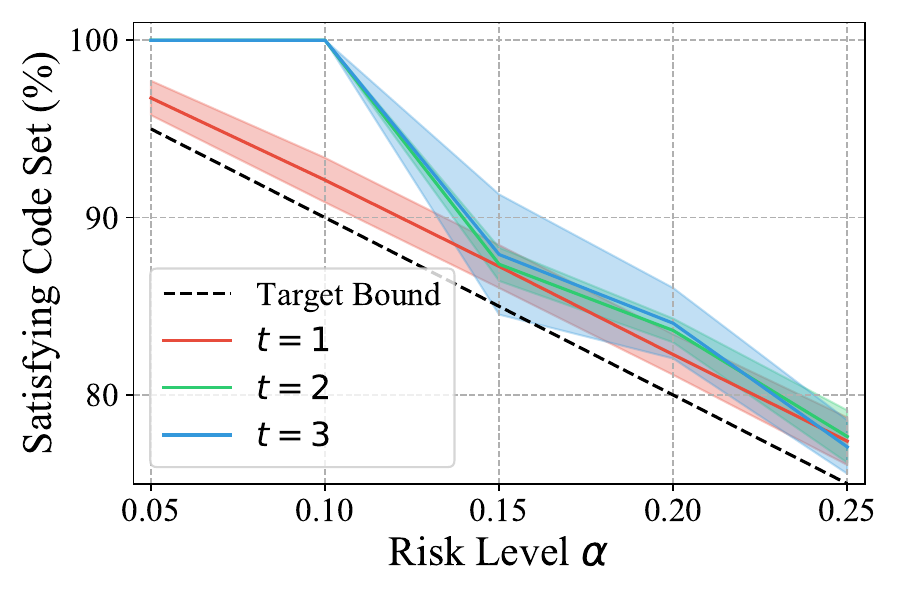}}
 \caption{Fig.~\ref{F:coverage} is the distribution of code set coverage over 100 trials for three different FWER algorithms, respectively; Fig.~\ref{F:t-remove} and~\ref{F:t-set} are parameter sensitivity results of maximum subtree removals $t_{\max}$ w.r.t risk level $\alpha$.
 These experiments are conducted on MBPP.}
\end{figure*}

\begin{table*}[t]
    \centering
    \caption{Performance of our method with different FWER algorithms when $\alpha=0.1$. The results are the average over 100 random splits. The smaller node removal is better, when the code set coverage exceeds the 0.9.}
      \begin{sc}
      \begin{tabular}{lccc}
      \toprule
       {FWER alg.} & {Bf} & {HBf} & {FST}  \\
      \midrule 
      {Removal} & $87.9{\scriptstyle\pm 3.1}$ & $87.0{\scriptstyle\pm 3.3}$ & $82.3{\scriptstyle\pm 1.9}$\\
      {Coverage} & $94.7{\scriptstyle\pm 2.2}$ & $94.4{\scriptstyle\pm 2.3}$ & $92.1{\scriptstyle\pm 1.3}$\\
      \bottomrule
      \end{tabular}
      \end{sc}
    \label{T:FWER}
\end{table*}

\section{Additional Results}
\label{A:exp}

\subsection{Different FWER-controlling algorithm}
To evaluate the performance of different FWER algorithms on our method, we plot the distribution of code set coverage ($\alpha=0.1$) as a violin plot over 100 random splits of MBPP in Fig.~\ref{F:coverage}, and list the mean results in Table.~\ref{T:FWER}.
The results demonstrate that compared to fixed sequence testing (FST), Bonferroni (Bf) and Holm–Bonferroni (HBf) methods are more conservative and remove more nodes, as confirmed by the results in Table~\ref{T:FWER}.
This indicates the importance of program prior knowledge used in FST.

\subsection{Parameter sensitivity analysis on $t_{\max}$}
We next perform parameter sensitivity analysis on the maximum subtree removals $t_{\max}$. 
The results on MBPP are shown in Fig.~\ref{F:t-remove} and Fig.~\ref{F:t-set}.
We observe that the best result is achieved when $t_{\max}=1$, and $t_{\max}\geq2$ leads to more node removals than $t_{\max}=1$.
A plausible explanation is that when models are not well-calibrated, a larger $t_{\max}$ 
distributes pruned nodes and may lead to localized errors that cannot be fully corrected.

\subsection{Sample outputs}

\begin{figure*}[t]
 \centering
 \subfigure{
 \includegraphics[width=3.7in]{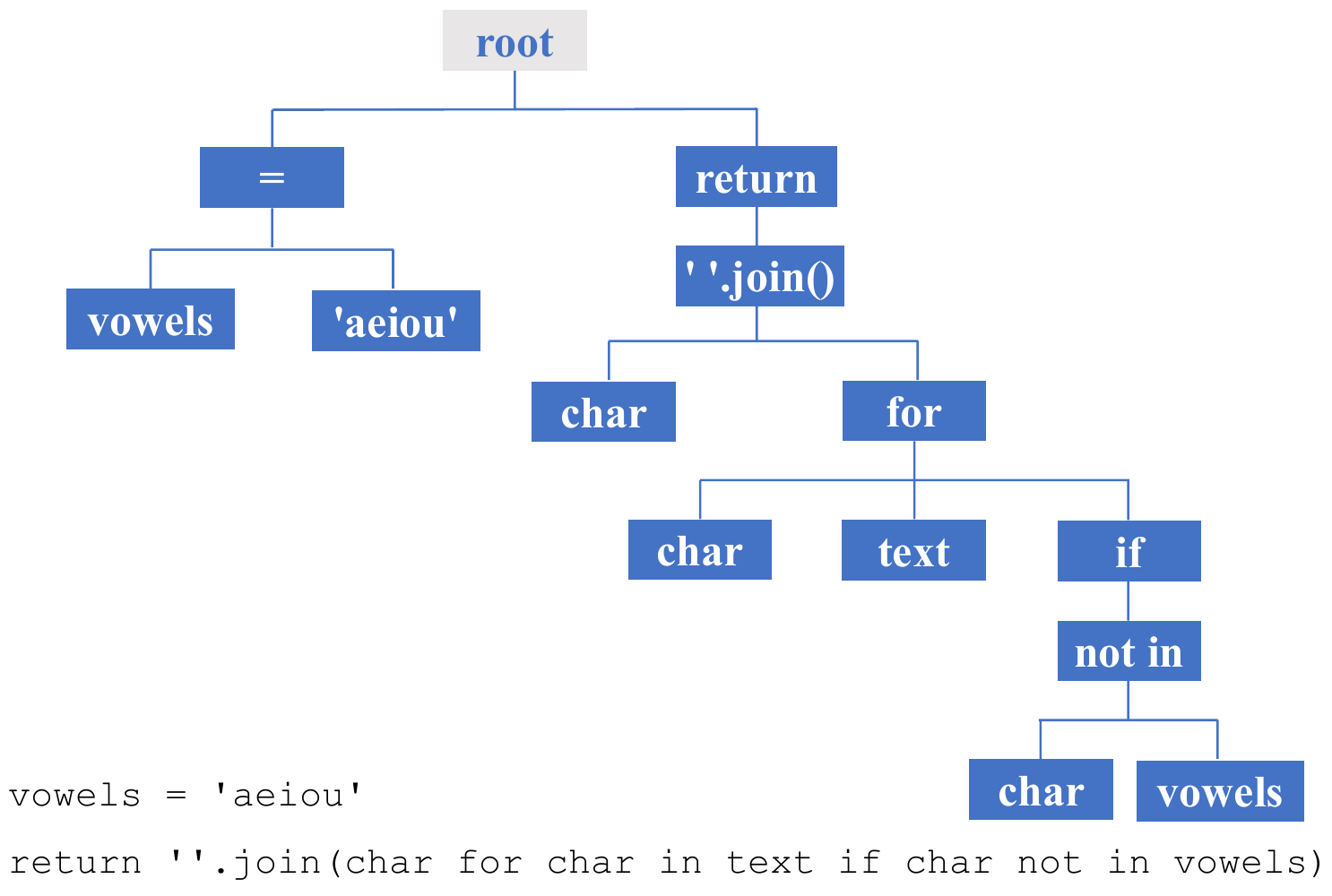}}
 \subfigure{
\includegraphics[width=3.7in]{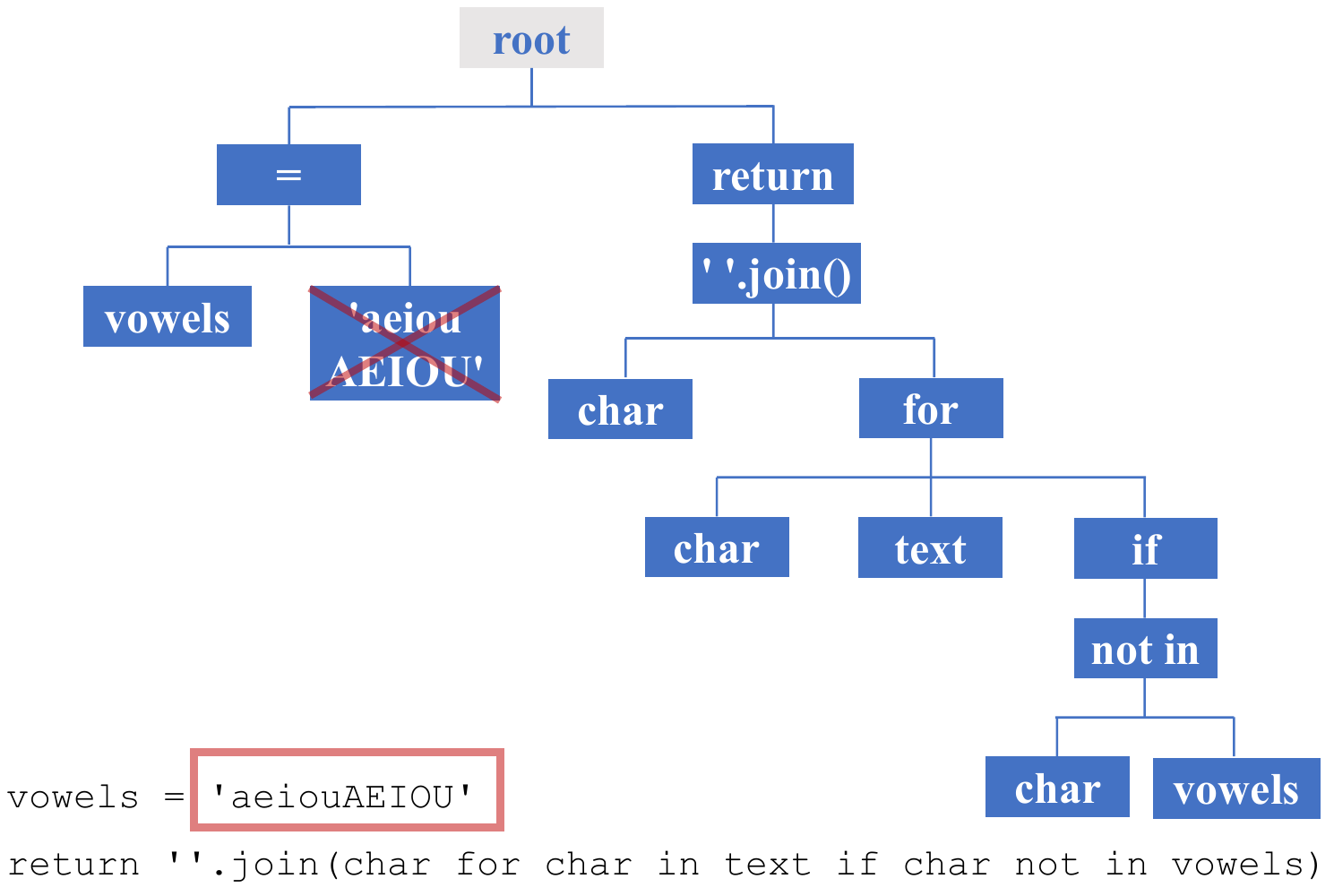}}
 \caption{An illustrative example from HumanEval. The top part is a correct code snippet, and the bottom part is a generated but incorrect one.
 \name removes one node in the AST, resulting a prediction set (i.e., a partial program) that contains  the correct program.}

\end{figure*}

\begin{figure*}[t]
 \centering
 \subfigure{
 \includegraphics[width=2.2in]{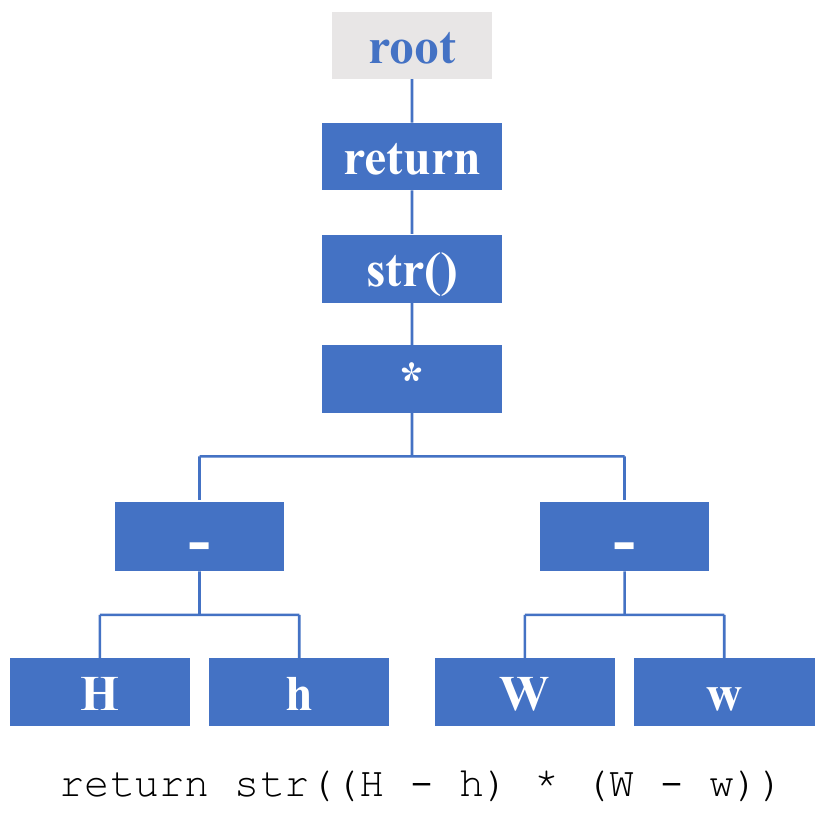}}
 \hspace{0.5in}
 \subfigure{
\includegraphics[width=2.2in]{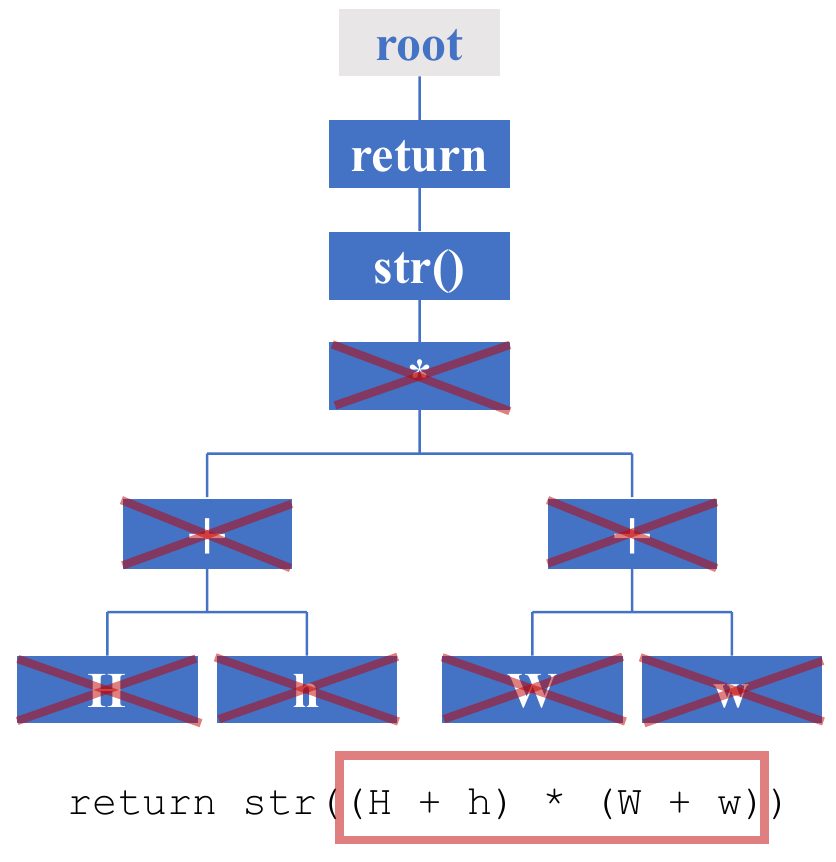}}
 \caption{An illustrative example from APPS. The left part is a correct code snippet, and the right part is a generated but incorrect one.
 \name removes seven nodes in the AST, resulting a prediction set (i.e., a partial program) that contains  the correct program.}

\end{figure*}

\begin{figure*}[t]
 \centering
 \subfigure{
 \includegraphics[width=4.0in]{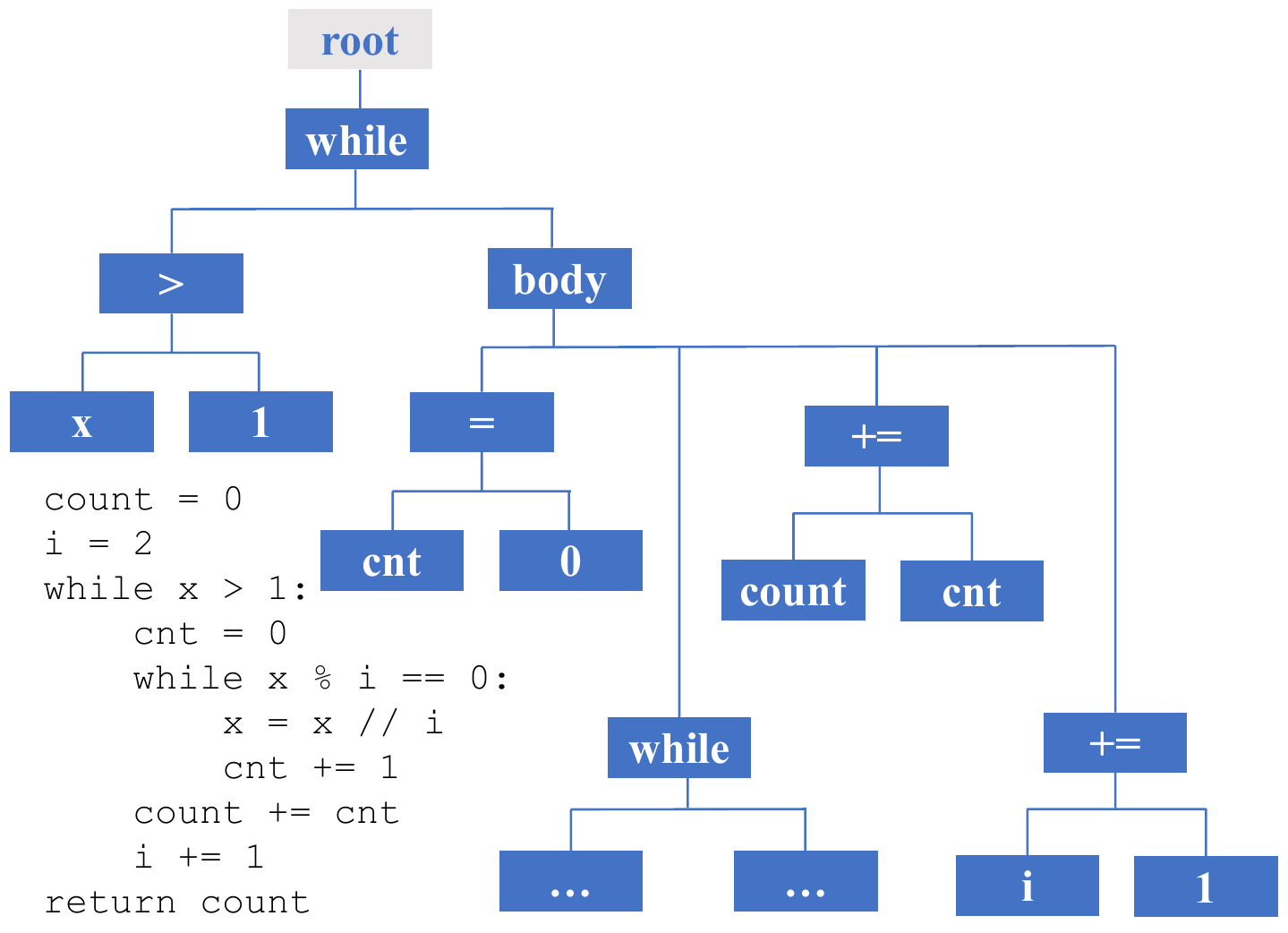}}
 \subfigure{
\includegraphics[width=4.4in]{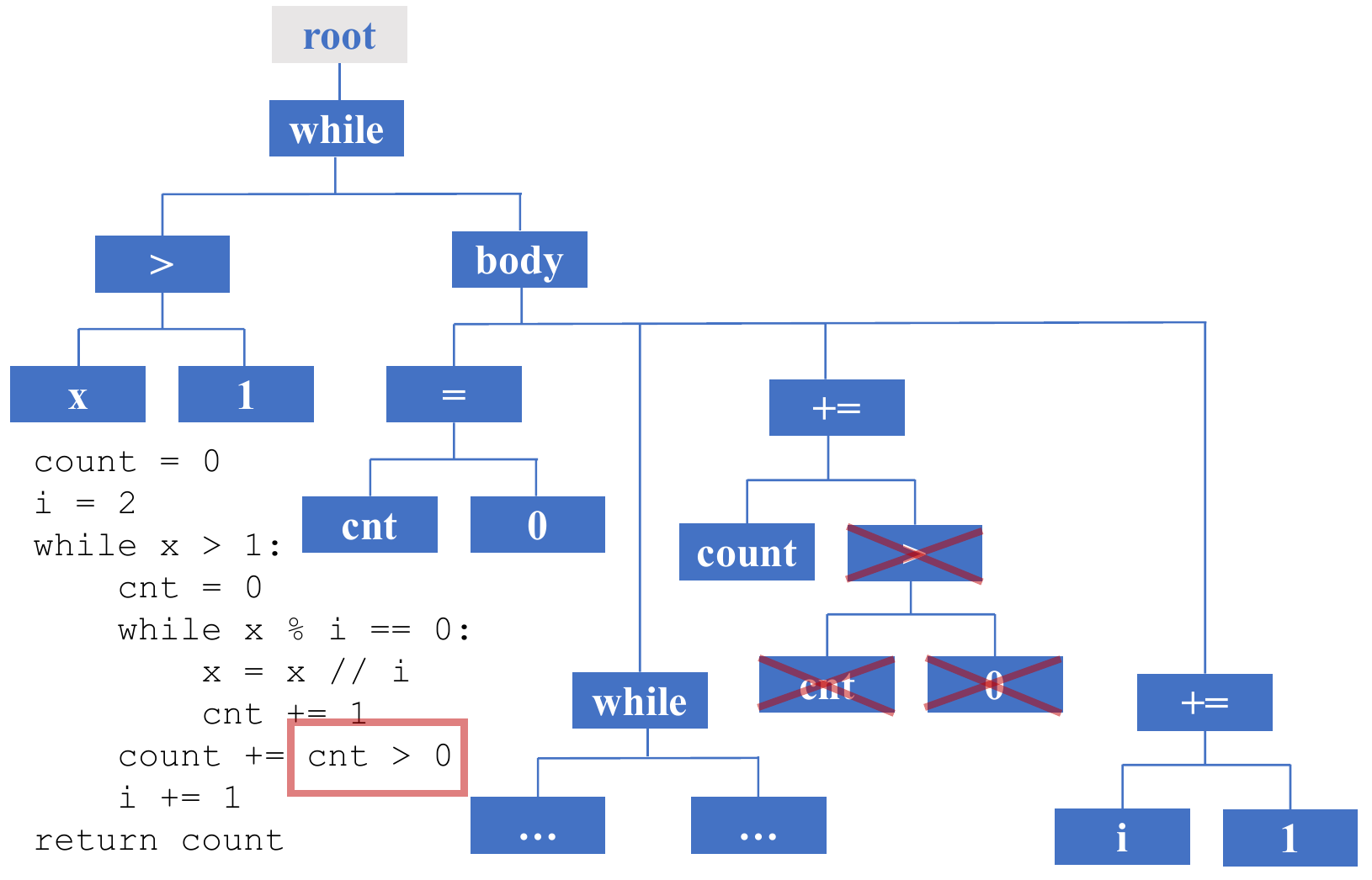}}
 \caption{An illustrative example from APPS. The top part is a correct code snippet, and the bottem part is a generated but incorrect one.
 \name removes three nodes in the AST (we omit the overlapping AST structures for brevity), resulting a prediction set (i.e., a partial program) that contains  the correct program.}

\end{figure*}

\end{document}